\documentclass[draft,onecolumn]{IEEEtran}
\IEEEoverridecommandlockouts
\usepackage{cite}
\usepackage{amsmath,amssymb,amsfonts,amsthm}
\usepackage{txfonts}
\usepackage{algorithmic}
\usepackage{graphicx}
\usepackage{caption}
\usepackage{textcomp}
\usepackage{xcolor}
\usepackage{tikz}
\usetikzlibrary{positioning, shapes.geometric}
\usepackage{setspace} 
\usepackage{comment}
\usepackage{url}

\usepackage{blkarray}
\newif\ifcomment
\commenttrue

\begin{document}

\newtheorem{theorem}{Theorem}
\newtheorem*{theorem*}{Theorem}
\newtheorem{proposition}{Proposition}
\newtheorem{definition}{Definition}
\newtheorem{corollary}{Corollary}
\newtheorem{lemma}{Lemma}
\newtheorem*{lemma*}{Lemma}
\newtheorem{example}{Example}
\newtheorem{remark}{Remark}

\title{Existence and Constructions of Strict Function-Correcting Codes with Data Protection}

\author{\IEEEauthorblockN{Charul Rajput\IEEEauthorrefmark{1}, B. Sundar Rajan\IEEEauthorrefmark{2}, Ragnar Freij-Hollanti\IEEEauthorrefmark{3}, Camilla Hollanti\IEEEauthorrefmark{3}}

\IEEEauthorblockA{\IEEEauthorrefmark{1}Signal Processing and Communication Research Center (SPCRC), \\
The International Institute of Information Technology Hyderabad, India \\
Email: charul.rajput@research.iiit.ac.in}
\linebreak
\IEEEauthorblockA{\IEEEauthorrefmark{2}Department of Electrical Communication Engineering,
Indian Institute of Science, \\ Bengaluru, India 
Email: bsrajan@iisc.ac.in}
\linebreak
\IEEEauthorblockA{\IEEEauthorrefmark{3}Department of Mathematics and Systems Analysis,
Aalto University, Finland \\
Emails: \{ragnar.freij, camilla.hollanti\}@aalto.fi}}



\maketitle

\vspace{-8mm}
\begin{abstract}
Function-correcting codes with data protection simultaneously protect both the data and a function of the data at distinct error-correction levels. When the function receives strictly stronger protection than the data, such a code is called a strict function-correcting code with data protection. While prior work established that well-known code families such as perfect codes and MDS codes cannot serve as strict function-correcting codes, the question of which codes can serve this role, and how to construct them, has remained open.

In this paper, we address the existence and construction of strict function-correcting codes for linear codes through three main contributions. First, using the $\alpha$-distance graph framework introduced in our prior work, we establish a graph-theoretic existence condition under which a code can serve as a strict function-correcting code. For linear codes, we prove that this distance graph is isomorphic to a Cayley graph, which implies that the connected components are cosets of the subcode generated by low-weight codewords. This transforms the existence problem into a subcode generation problem.
Second, a classical result of Simonis shows that any linear code can be transformed into one with the same parameters whose basis consists entirely of minimum-weight codewords. We develop a converse construction: under certain conditions on the weight distribution, a linear code can be transformed into a new code with the same parameters but fewer independent minimum-weight codewords, thereby producing codes suitable for use as strict function-correcting codes. As a source of codes satisfying these conditions, we introduce chain codes, an infinite family of linear codes that are generated by their minimum-weight codewords.
Third, we present an independent construction of strict function-correcting codes from narrow-sense BCH codes with designed distance three, by proving that the minimum-weight codewords of such codes are contained in a proper subcode.
\end{abstract}

\begin{IEEEkeywords}
Error-correction, function-correcting codes, linear codes, distance graphs, Cayley graphs, BCH codes, redundancy bounds.
\end{IEEEkeywords}

\section{Introduction}
 Consider a sender who encodes a message and transmits it over a noisy channel to a receiver. In some scenarios, the receiver does not need to recover the full message but is only interested in evaluating a particular function of the message. Such situations arise naturally in machine learning applications, where a correct classification decision or prediction is often more important than exact recovery of the underlying data or parameters. Protecting the entire message via a classical error-correcting code would guarantee that the function value can be recovered, but this may require more redundancy than necessary.
If the function of interest is known to the sender, the message can be encoded to exploit the structure of the function with potentially lower redundancy. This idea was formalized by Lenz et al. \cite{LBWY2023}, who introduced \emph{function-correcting codes} (FCCs). In this framework, a systematic encoding is designed so that codewords corresponding to different function values are separated by a prescribed minimum distance, while no distance constraint is imposed between codewords sharing the same function value. The authors of \cite{LBWY2023} established the equivalence between FCCs and irregular-distance codes and used this connection to derive bounds on the optimal redundancy for various function classes.
 
A limitation of the original FCC setup is that no error protection is guaranteed for the data itself. The framework of \cite{LBWY2023} ensures reliable recovery of the function value, but the underlying message may be corrupted beyond recovery. In many applications, such as networks where different nodes compute different functions of the same data, or storage systems where certain attributes are more critical than others, it is desirable to provide at least a baseline level of error protection for the data alongside the stronger protection for the function. Motivated by this, Rajput et al. \cite{RRFH2025} introduced \emph{function-correcting codes with data protection}. In this generalized framework, the encoding satisfies two distance requirements: a minimum distance~$d_d$ between all pairs of distinct codewords ensures data protection, while a potentially larger minimum distance~$d_f \geq d_d$ between codewords with different function values ensures function protection. When the inequality is strict, i.e., $d_f > d_d$, the code is called a \emph{strict} FCC with data protection. A two-step construction procedure and bounds on the optimal redundancy were presented in~\cite{RRFH2025}, along with examples demonstrating that data protection can sometimes be added to existing FCCs without increasing redundancy.
 
In a follow-up work \cite{RRFH2026}, the authors investigated whether well-known families of classical codes can serve as strict FCCs. By associating to each code a graph based on the pairwise Hamming distances of its codewords, they showed that codes whose distance graph is sufficiently connected cannot provide strictly stronger protection for any nontrivial function. In particular, perfect codes and maximum distance separable (MDS) codes were shown to be unsuitable. These non-existence results naturally motivate the study of codes that \emph{can} serve as strict function-correcting codes with data protection. In this work, we address this problem by establishing existence conditions and providing explicit constructions of such codes.
 
\subsection{Related Work}
 
Function-correcting codes were introduced in~\cite{LBWY2023}, and since then have been studied along several directions. Premlal and Rajan~\cite{PR2025} developed a graph-based framework and derived tighter redundancy bounds, focusing on linear functions and connections to classical systematic codes. Ge, Xu, Zhang, and Zhang~\cite{GXZZ2025} obtained near-optimal redundancy bounds for Hamming weight and Hamming weight distribution functions. Ly and Soljanin~\cite{LS2025} derived redundancy bounds over general finite fields and established their tightness for sufficiently large fields.
 
Function-correcting codes have been extended to channel models beyond the binary symmetric channel. Xia et al. ~\cite{XLC2024} introduced FCCs for symbol-pair read channels, and Singh et al.~\cite{SSY2025} generalized this to $b$-symbol read channels over finite fields. Sampath and Rajan~\cite{SR2025} studied linear functions in the $b$-symbol setting and derived Plotkin-type bounds. In the Lee metric, \cite{VS2025} and \cite{HUR2025} developed theoretical foundations and explicit constructions. The work in~\cite{LL2025} extended the framework to codes with homogeneous distance.
 
In~\cite{RRFH2025a}, authors proposed FCCs for locally bounded functions and derived redundancy upper bounds. The work in~\cite{RRFH2026a} introduced function-correcting partition codes, which operate directly on a partition of the message domain rather than on a specific function, enabling a single code to protect multiple functions simultaneously with applications in broadcast networks.
 
On the constructive side for FCCs with data protection, Durgi et al.~\cite{DMKPR2026} studied FCCs for the Hamming code membership function with optimal data protection, while ~\cite{DMKPR2026a} investigated the role of distance-matrix structure in the error performance of FCCs for maximally unbalanced Boolean functions.
 
Two classical results are closely related to our constructions. Simonis~\cite{S1992} proved that any linear code can be transformed into a code with the same parameters whose generator matrix consists entirely of minimum-weight codewords. Mogilnykh and Solov'eva~\cite{MS2020} showed that for primes $p \geq 5$, neither the narrow-sense BCH code $C_{1,2}$ of length $p^m - 1$ over $\mathbb{F}_p$ nor its extension is generated by its minimum-weight codewords, and that the extended code admits a basis of weight-5 codewords. Their analysis of the subcode structure of these BCH codes is closely related to our results in Section~\ref{sec:bch}.

\subsection{Contributions}
 
The main contributions of this paper are as follows.
 
\begin{enumerate}
\item Using the $\alpha$-distance graph framework introduced in~\cite{RRFH2026}, we establish
sufficient conditions under which a code can serve as a strict FCC with data protection (Theorem~\ref{thm:existence}).
 
\item We prove that for a linear code, the $\alpha$-distance graph is a Cayley graph on the code viewed as an additive group (Lemma~\ref{lem:cayley}). As a consequence, the connected components are cosets of the subcode generated by codewords of weight at
most~$\alpha$, and the existence problem reduces to a subcode generation problem.
 
\item Simonis~\cite{S1992} showed that any linear code can be transformed into one with the
same parameters whose basis consists entirely of minimum-weight codewords. We develop a
converse: under conditions on the weight distribution, a code can be transformed into a new
code with the same parameters but fewer independent minimum-weight codewords
(Theorems~\ref{thm:reverse1} and~\ref{thm:reverse2}).
 
\item We introduce \emph{chain codes}, a family of linear codes that are generated by their minimum-weight codewords and satisfy the conditions required by the converse construction (Propositions~\ref{prop:chain1} and~\ref{prop:chain2}).
 
\item We prove that for primes $p \geq 5$ and odd $m \geq 3$, the weight-3 codewords of the
narrow-sense BCH code $C_{1,2}$ over $\mathbb{F}_p$ are contained in the subcode
$C_{1,2,\, p+1,\, p^2+1,\, \ldots,\, p^{(m-1)/2}+1}$ (Theorem~\ref{thm:bch}). This
yields explicit strict $(f\!:\!3,4)$-FCC constructions.
\end{enumerate}
 
\subsection{Organization}
 
The paper is organized as follows. Section~\ref{sec:prelim} recalls the necessary background on FCCs with data protection. Section~\ref{sec:existence} establishes the graph-theoretic existence condition for strict FCCs and develops the Cayley graph characterization for linear codes. Section~\ref{sec:reverse} presents the converse of Simonis's theorem and the associated construction. Section~\ref{sec:chain} introduces chain codes and establishes their properties. Section~\ref{sec:bch} develops the FCC construction from BCH codes. Section~\ref{sec:conclusion} concludes the paper.

\subsection{Notations}

$\mathbb{F}_q$ denote the finite field with $q$ elements, where $q$ is a prime power. For a positive integer $n$, the set $\{1, 2, \ldots, n\}$ is denoted by $[n]$. For two vectors $x, y \in \mathbb{F}_q^n$, the Hamming distance $d(x, y)$ is the number of coordinates in which $x$ and $y$ differ. The Hamming weight of a vector $x \in \mathbb{F}_q^n$ is $\mathrm{wt}(x) = d(x, 0)$, and its support is $\mathrm{supp}(x) = \{i \in [n] : x_i \neq 0\}$.
A block code $C$ over $\mathbb{F}_q$ of length $n$, size $M = |C|$, and minimum distance $d = \min\{d(x,y) : x, y \in C,\, x \neq y\}$ is called an $(n, M, d)_q$ code. A linear code $C$ over $\mathbb{F}_q$ of length $n$, dimension $k$, and minimum distance $d$ is denoted by $[n, k, d]_q$; in this case $M = q^k$. The number of codewords of weight $w$ in $C$ is denoted by $A_w(C)$. For a subcode $D \subseteq C$, the codimension of $D$ in $C$ is defined as $\mathrm{codim}(D) = \dim(C) - \dim(D)$.


\section{Preliminaries}
\label{sec:prelim}

\subsection{Function-Correcting Codes with Data Protection}

Function-correcting codes were introduced in~\cite{LBWY2023} to protect a function value against errors without necessarily protecting the data. The following generalization, introduced in~\cite{RRFH2025}, assigns separate levels of error protection to the data and the function value.

\begin{definition}[FCC with Data Protection {\cite{RRFH2025}}]
\label{def:fcc-dp}
Consider a function $f \colon \mathbb{F}_q^k \to \mathrm{Im}(f)$. An encoding $\mathcal{C}_f \colon \mathbb{F}_q^k \to \mathbb{F}_q^{k+r}$ is called an $(f : d_d, d_f)$-FCC if:
\begin{itemize}
\item for any $u_1, u_2 \in \mathbb{F}_q^k$ with $u_1 \neq u_2$,
\[
d(\mathcal{C}_f(u_1), \mathcal{C}_f(u_2)) \geq d_d,
\]
\item for any $u_1, u_2 \in \mathbb{F}_q^k$ with $f(u_1) \neq f(u_2)$,
\[
d(\mathcal{C}_f(u_1), \mathcal{C}_f(u_2)) \geq d_f,
\]
\end{itemize}
where $d_d$ and $d_f$ are non-negative integers with $d_d \leq d_f$.
\end{definition}

The parameter $d_d$ ensures that distinct messages are mapped to distinct codewords at minimum distance $d_d$, providing error protection for the data. The parameter $d_f$ provides the protection for the function value.

\begin{definition}[Strict FCC {\cite{RRFH2026}}]
\label{def:strict}
An $(f : d_d, d_f)$-FCC is called \emph{strict} if $d_f > d_d$.
\end{definition}

\begin{remark}
\label{rem:systematic}
In the original FCC framework of~\cite{LBWY2023}, only systematic codes are considered. This is because if non-systematic codes are allowed, multiple messages with the same function value could be mapped to the same codeword, making them indistinguishable at the decoder. In contrast, for FCCs with data protection, the encoding need not be systematic: the condition $d_d > 0$ already ensures that distinct messages are mapped to distinct codewords.
\end{remark}

\subsection{Distance Graphs and Non-Existence Results}

The following graph-theoretic framework was introduced in~\cite{RRFH2026} to study the feasibility of strict FCCs.

\begin{definition}[$\alpha$-Distance Graph {\cite{RRFH2026}}]
\label{def:alpha-graph}
Let $C$ be a code with minimum distance $d_{\min}(C)$. For $\alpha \geq d_{\min}(C)$, the $\alpha$-distance graph of $C$, denoted by $G_\alpha(C)$, is the graph whose vertex set is $C$ and in which two distinct vertices $c_1, c_2 \in C$ are adjacent if and only if $d(c_1, c_2) \leq \alpha$.
\end{definition}

The graph $G_\alpha(C)$ is a subgraph of $G_{\alpha'}(C)$ whenever $\alpha \leq \alpha'$. In particular, if $G_\alpha(C)$ is connected, then $G_{\alpha'}(C)$ is also connected for all $\alpha' \geq \alpha$. Moreover, $G_{d_{\max}}(C)$ is the complete graph, where $d_{\max}=\max\{d(x,y) : x, y \in C,\, x \neq y\}$.

\begin{example}
\label{ex:aDG}
Let $C = \{00000,\, 00001,\, 00010,\, 01111,\, 10111,\, 11111\} \subset \mathbb{F}_2^5$. The $\alpha$-distance graph with $\alpha = 2 > d_{\min}(C)$, i.e., $G_2(C)$, has edges only between pairs at distance at most~$2$. The graph consists of two disjoint triangles, as shown in Fig.~\ref{fig:aDG}.

\begin{figure}[t]
\centering
\begin{tikzpicture}[scale=0.8,
  every node/.style={circle,draw,inner sep=2pt,minimum size=18pt}]
  \node (a) at (0,0)   {\small 00000};
  \node (b) at (-1.8,1.6) {\small 00001};
  \node (c) at (1.8,1.6)  {\small 00010};

  \node (d) at (5.2,0)   {\small 11111};
  \node (e) at (3.4,1.6) {\small 01111};
  \node (f) at (7,1.6) {\small 10111};

  \draw (a)--(b);
  \draw (a)--(c);
  \draw (b)--(c);

  \draw (d)--(e);
  \draw (d)--(f);
  \draw (e)--(f);
\end{tikzpicture}
\caption{The $\alpha$-distance graph $G_2(C)$ for Example~\ref{ex:aDG}.}
\label{fig:aDG}
\end{figure}
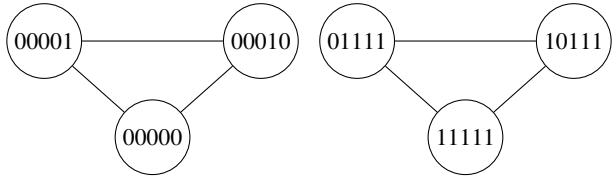
\end{example}

The non-existence results established in~\cite{RRFH2026} demonstrate that the structure of the distance graph imposes strong constraints on the existence of strict FCCs.

\begin{theorem}[{\cite{RRFH2026}}]
\label{thm:nonexist1}
Let $C$ be an $(n, q^k, d)$ code. If the $\alpha$-distance graph $G_\alpha(C)$ is connected, then $C$ cannot be an $(f : d, d_f)$-FCC for any $f \colon \mathbb{F}_q^k \to \mathrm{Im}(f)$ with $|\mathrm{Im}(f)| \geq 2$ and $d_f > \alpha$.
\end{theorem}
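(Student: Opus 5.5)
The argument is by contradiction, using connectivity of $G_\alpha(C)$ to propagate a single function value across the whole code. Suppose $C$ is the image of an encoding $\mathcal{C}_f \colon \mathbb{F}_q^k \to \mathbb{F}_q^n$ that is an $(f : d, d_f)$-FCC with $d_f > \alpha$ and $|\mathrm{Im}(f)| \geq 2$.

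First, the data condition with $d_d = d \geq 1$ (the minimum distance of a code with $q^k$ distinct codewords is positive) forces $\mathcal{C}_f$ to be injective. Since $|C| = q^k$, the encoding is therefore a bijection $\mathbb{F}_q^k \to C$. This lets us transport $f$ to a well-defined function $g \colon C \to \mathrm{Im}(f)$, namely $g(c) = f(\mathcal{C}_f^{-1}(c))$. Because $f$ is surjective onto its image, $g$ takes at least two distinct values.

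Next, consider any edge $\{c_1, c_2\}$ of $G_\alpha(C)$. By definition $d(c_1, c_2) \leq \alpha < d_f$. The function-protection condition says codewords with different function values are at distance at least $d_f$, so we must have $g(c_1) = g(c_2)$. Thus $g$ is constant along every edge. Since $G_\alpha(C)$ is connected, any two codewords are joined by a path, and inducting along the path shows $g$ is constant on all of $C$. Then $f$ is constant on $\mathbb{F}_q^k$, contradicting $|\mathrm{Im}(f)| \geq 2$.

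The only delicate point is the bijection step. It requires $d \geq 1$ so that distinct messages give distinct codewords, and it uses the fact that the code has exactly $q^k$ elements, so $g$ is defined on every vertex of the graph. Everything after that is a direct propagation argument along paths.
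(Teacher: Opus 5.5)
Your proof is correct and is essentially the standard argument for this result. Injectivity (from $d \geq 1$) together with $|C| = q^k$ makes the encoding a bijection onto $C$, so every vertex of $G_\alpha(C)$ carries a function value; each edge joins codewords at distance at most $\alpha < d_f$, so connectivity forces $f$ to be constant. The paper only cites this theorem from \cite{RRFH2026} without reproducing a proof, but your path-propagation reasoning matches the edge argument it uses in the converse direction (Theorem~\ref{thm:existence}).
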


\begin{theorem}[{\cite{RRFH2026}}]
\label{thm:nonexist2}
Let $C$ be an $(n, q^k, d)$ code. If the $\alpha$-distance graph $G_\alpha(C)$ has $Q$ connected components, then $C$ cannot be an $(f : d, d_f)$-FCC for any $f \colon \mathbb{F}_q^k \to \mathrm{Im}(f)$ with $|\mathrm{Im}(f)| \geq Q + 1$ and $d_f > \alpha$.
\end{theorem}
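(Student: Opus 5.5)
The plan is to argue by contradiction using the component structure of $G_\alpha(C)$. Suppose $C$ is the image of an $(f:d,d_f)$-FCC $\mathcal{C}_f$ with $d_f>\alpha$ and $|\mathrm{Im}(f)|\geq Q+1$. Since $d_d=d>0$, the encoding $\mathcal{C}_f$ is injective (Remark~\ref{rem:systematic}), and its image has size $q^k=|C|$, so it is a bijection onto $C$. Hence $f$ induces a well-defined labeling $g\colon C\to \mathrm{Im}(f)$ given by $g(\mathcal{C}_f(u))=f(u)$, and $g$ is surjective onto $\mathrm{Im}(f)$.

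The key step is to show that $g$ is constant on every connected component of $G_\alpha(C)$. Take an edge $\{c_1,c_2\}$ of $G_\alpha(C)$, so that $d(c_1,c_2)\leq\alpha<d_f$. Write $c_i=\mathcal{C}_f(u_i)$. If $f(u_1)\neq f(u_2)$, the function-protection condition would force $d(c_1,c_2)\geq d_f$, which is impossible. Therefore $g(c_1)=g(c_2)$ along every edge. Following a path inside a component, $g$ takes a single value on the whole component.

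Consequently $g$ factors through the set of components, so $|\mathrm{Im}(f)|=|g(C)|\leq Q$. This contradicts $|\mathrm{Im}(f)|\geq Q+1$. Theorem~\ref{thm:nonexist1} is exactly the special case $Q=1$.

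No step here is a real obstacle. The only point needing care is that the labeling $g$ is well defined and surjective, that is, that the encoding is a bijection onto $C$. This follows because $d>0$ and $|C|=q^k$.
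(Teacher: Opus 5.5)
Your proof is correct. The encoding is a bijection onto $C$ (because $d>0$ and $|C|=q^k$), so the labeling is defined on every codeword; every edge of $G_\alpha(C)$ joins codewords at distance at most $\alpha<d_f$, so the label is constant on each component, which gives $|\mathrm{Im}(f)|\le Q$. The paper quotes this result from prior work without reproducing a proof, but your argument is the natural one and uses the same edge-based reasoning, in the reverse direction, that the paper uses in the proof of Theorem~\ref{thm:existence}.
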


\subsection{Cayley Graphs}

We recall the definition of a Cayley graph, which will be used in Section~\ref{sec:cayley} to characterize the distance graphs of linear codes.

\begin{definition}[Cayley Graph]
\label{def:cayley}
Let $G$ be a group with identity element $e$, and let $S \subseteq G \setminus \{e\}$ be a subset that is closed under taking inverses. The Cayley graph of $G$ with respect to $S$, denoted by $\mathrm{Cay}(G, S)$, is the undirected graph with vertex set $G$ in which two vertices $g, h \in G$ are adjacent if and only if $g - h \in S$.
\end{definition}

The following property of Cayley graphs is well known (see, e.g.,~\cite{KS2007}).

\begin{proposition}
\label{prop:cayley-components}
Let $\mathrm{Cay}(G, S)$ be a Cayley graph. If the set $S$ does not generate $G$, then $\mathrm{Cay}(G, S)$ is disconnected, and each connected component is a coset of the subgroup $\langle S \rangle$ generated by $S$. In particular, all connected components have the same size $|\langle S \rangle|$, and the number of connected components is $|G|/|\langle S \rangle|$.
\end{proposition}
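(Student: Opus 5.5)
The plan is to identify the connected component of the identity with the subgroup $H=\langle S\rangle$, and then transport this description to every other component by translation. The group is written additively, consistent with the adjacency rule $g-h\in S$ in Definition~\ref{def:cayley}. Because $S$ is closed under inverses, the adjacency relation is symmetric, so $\mathrm{Cay}(G,S)$ is a genuine undirected graph.

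First I show that the vertex set of the component containing $e$ is exactly $H$. Suppose $e=g_0,g_1,\dots,g_t=g$ is a path in the graph. Each step satisfies $g_i-g_{i-1}=s_i\in S$, and the differences telescope, so $g=s_1+\cdots+s_t\in H$. Conversely, every element of $H$ is a finite sum $s_1+\cdots+s_t$ with $s_i\in S$. This uses $S=-S$; in a finite group it also holds because inverses are positive multiples. The partial sums $0,s_1,s_1+s_2,\dots$ then form a walk from $e$ to that element, so it lies in the component of $e$.

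Next, for any $a\in G$ the translation map $x\mapsto x+a$ is a graph automorphism, since $(x+a)-(y+a)=x-y$. This map sends the component of $e$ onto the component of $a$. Hence the component of $a$ is the coset $a+H$.

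Since the cosets of $H$ partition $G$, the components are exactly the cosets. Each component therefore has size $|H|$, and by Lagrange's theorem there are $|G|/|H|$ of them. If $S$ does not generate $G$, then $H\neq G$, so there are at least two cosets and the graph is disconnected. No step here is a real obstacle. The only care needed is in the reverse inclusion, where I must make sure that arbitrary words in $S$ and $-S$ are realized as walks; closure of $S$ under inverses handles this directly.
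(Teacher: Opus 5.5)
Your proof is correct and complete. The paper does not prove this proposition; it quotes it as a well-known fact and cites the literature, so there is no competing argument to compare against. What you give is the standard proof. You show the component of $e$ equals $\langle S\rangle$ by telescoping differences along a path and by realizing sums of elements of $S$ as walks. You then use the translations $x\mapsto x+a$ as graph automorphisms, which carry that component onto $a+\langle S\rangle$, and finish with Lagrange's theorem.

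One small remark: your partial-sum walk $0, s_1, s_1+s_2,\dots$ and the identification $H+a=a+H$ silently use commutativity. This is harmless here, since the paper only applies the result to the additive group of a linear code. For a non-abelian group with the adjacency rule $g-h\in S$, two changes are needed. First, list the partial sums in reverse order, $0,\ s_t,\ s_{t-1}+s_t,\ \dots$, so that consecutive differences are $s_{t-i+1}\in S$. Second, the components are then the right cosets $H+a$, because $(x+a)-(y+a)=x-y$.
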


\begin{example}
\label{ex:cayley-Z4}
Consider the group $G = \mathbb{Z}_4 = \{0, 1, 2, 3\}$ under addition modulo~$4$, with identity element $e = 0$. Let $S = \{1, 3\} \subseteq \mathbb{Z}_4 \setminus \{0\}$. Since $-1 \equiv 3 \pmod{4}$ and $-3 \equiv 1 \pmod{4}$, the set $S$ is closed under taking inverses, and $\mathrm{Cay}(\mathbb{Z}_4, S)$ is undirected. The edges are $\{0,1\}$, $\{1,2\}$, $\{2,3\}$, and $\{3,0\}$, so $\mathrm{Cay}(\mathbb{Z}_4, S)$ is the cycle graph $C_4$, as shown in Fig.~\ref{fig:cayley-Z4}. Since $S$ generates $\mathbb{Z}_4$, the graph is connected.

\begin{figure}[t]
\centering
\begin{tikzpicture}[scale=1.2,
  every node/.style={circle, draw, inner sep=2.2pt, font=\small},
  edge/.style={thick}
]
  \node (v0) at (0,0) {0};
  \node (v1) at (2,0) {1};
  \node (v2) at (2,2) {2};
  \node (v3) at (0,2) {3};

  \draw[edge] (v0) -- (v1);
  \draw[edge] (v1) -- (v2);
  \draw[edge] (v2) -- (v3);
  \draw[edge] (v3) -- (v0);

  \node[draw=none, rectangle, font=\normalsize] at (1,-0.8)
    {$\mathrm{Cay}(\mathbb{Z}_4,\{1,3\}) \cong C_4$};
\end{tikzpicture}
\caption{The Cayley graph $\mathrm{Cay}(\mathbb{Z}_4, \{1,3\})$ for Example~\ref{ex:cayley-Z4}.}
\label{fig:cayley-Z4}
\end{figure}
\end{example}

\subsection{BCH Codes}

We recall the definition of narrow-sense BCH codes, which will be used in Section~\ref{sec:bch}.

Let $\alpha$ be a primitive element of $\mathbb{F}_{p^m}$, where $p$ is a prime and $m \geq 3$. For an integer $i$, the $p$-cyclotomic coset of $i$ modulo $p^m - 1$ is
\[
\mathrm{Cl}(i) = \{i, ip, ip^2, \ldots, ip^{\ell_i - 1}\} \pmod{p^m - 1},
\]
where $\ell_i = |\mathrm{Cl}(i)|$ divides $m$.

\begin{definition}[Narrow-Sense BCH Code]
\label{def:bch}
The narrow-sense BCH code $C_{1,2,\ldots,\delta-1}$ of length $n = p^m - 1$ over $\mathbb{F}_p$ with designed distance $\delta$ is the cyclic code with defining set
\[
T_{1,2,\ldots,\delta-1} = \mathrm{Cl}(1) \cup \mathrm{Cl}(2) \cup \cdots \cup \mathrm{Cl}(\delta - 1).
\]
The BCH bound guarantees that the minimum distance satisfies $d \geq \delta$, and the dimension is
\[
\dim(C_{1,2,\ldots,\delta-1}) = n - \left| \bigcup_{i=1}^{\delta-1} \mathrm{Cl}(i) \right|.
\]
\end{definition}

In particular, the code $C_{1,2}$ has designed distance $\delta = 3$ and dimension $\dim(C_{1,2}) = p^m - 1 - 2m$ (when $\mathrm{Cl}(1)$ and $\mathrm{Cl}(2)$ are disjoint and each of size $m$). The following result on the minimum distance of $C_{1,2}$ was established in~\cite{CTZ1999}.

\begin{proposition}[{\cite{CTZ1999}}]
\label{prop:bch-distance}
The minimum distance of the BCH code $C_{1,2}$ is $3$ for all primes $p \neq 3$, and $4$ for $p = 3$. The extensions of these codes have minimum distances $4$ and $5$, respectively.
\end{proposition}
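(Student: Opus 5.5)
We would argue directly from the parity-check description. Since $C_{1,2}$ is cyclic with defining set $\mathrm{Cl}(1)\cup\mathrm{Cl}(2)$ and its codewords have entries in $\mathbb{F}_p$ (so Frobenius closes the conditions over each cyclotomic coset), a vector $c$ supported on the positions $\alpha^{i_1},\dots,\alpha^{i_w}$ with nonzero values $c_1,\dots,c_w\in\mathbb{F}_p^*$ lies in $C_{1,2}$ if and only if
\[
\sum_{j=1}^w c_j x_j = 0 \quad\text{and}\quad \sum_{j=1}^w c_j x_j^2 = 0,
\]
where $x_j=\alpha^{i_j}$ are distinct nonzero elements of $\mathbb{F}_{p^m}$. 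The BCH bound gives $d\ge 3$; a quick check also confirms this, since for $w=2$ the two equations force $x_1=x_2$. For the extended code we append the overall parity $\sum_j c_j$, which adds the condition $\sum_j c_j=0$ in the extended coordinates. The whole problem therefore reduces to deciding for which $w$ and which coefficient vectors this small algebraic system has a solution in distinct nonzero $x_j$.

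\textbf{Weight $3$, $p\ne 3$.} For $p=2$ we have $\mathrm{Cl}(2)=\mathrm{Cl}(1)$, so the code is the binary Hamming code and $d=3$ is immediate. For $p\ge5$ we eliminate $x_3=-(ax_1+bx_2)/c$ and set $x_1=1$, $x_2=y$. This gives a quadratic in $y$ with coefficients in $\mathbb{F}_p$ and discriminant proportional to $-abc(a+b+c)$. The plan is to choose $a,b,c\in\mathbb{F}_p^*$ with $a+b+c\ne0$ so that this discriminant is a nonzero square in $\mathbb{F}_p$; a character-sum count of such triples will be positive for $p\ge5$ (for example $p=5$, $(a,b,c)=(1,2,3)$ works). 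With such a choice the roots $y$ lie in $\mathbb{F}_p\subseteq\mathbb{F}_{p^m}$. We then verify that the excluded cases ($y=0$, $y=1$, $x_3\in\{0,1,y\}$) can be avoided, using the freedom in $(a,b,c)$ if necessary. This produces a weight-$3$ codeword.

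\textbf{The case $p=3$.} Here every coefficient is $\pm1$, and up to scaling the only patterns are $(1,1,1)$ and $(1,1,-1)$. For $(1,1,1)$ we have $a+b+c=0$: the two equations give $x_1+x_2+x_3=0$ and $x_1^2+x_2^2+x_3^2=0$, hence $\sigma_2=0$ in characteristic $3$, so the $x_j$ are the roots of $X^3-\sigma_3$, which has a single root with multiplicity $3$. This contradicts distinctness. For $(1,1,-1)$ the discriminant $-abc(a+b+c)=1$ is nonzero, but the roots are forced into the degenerate positions $y\in\{0,1\}$ or $x_3\in\{x_1,x_2\}$, which we will check directly. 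This rules out weight $3$. To get $d=4$ we exhibit a weight-$4$ solution, by fixing two of the $x_j$ and solving for the other two as roots of a quadratic in the same way. A counting argument over the free parameters then guarantees that some choice avoids all collisions.

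\textbf{Extended codes.} The extended code has distance $d+1$ exactly when the minimum-weight codewords can be taken to have $\sum_j c_j\neq0$ and no codeword of weight $d+1$ has coefficient sum $0$. So we repeat the analysis above with the extra constraint $\sum_j c_j=0$.
\begin{itemize}
\item For $p\ne3$ we must show there is no weight-$3$ solution with $a+b+c=0$, but there is one of weight $4$. The first claim follows from the $\sigma_2$ computation, adapted to the fact that $\sigma_2\ne0$ is forced when $p\ne3$.
\item For $p=3$ we must show there is no zero-sum weight-$4$ solution, but there is one of weight $5$.
\end{itemize}

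The main obstacle we expect is this last bookkeeping: ruling out zero-sum weight-$4$ solutions when $p=3$ and constructing explicit weight-$4$ and weight-$5$ words. Here the elimination produces quartic rather than quadratic conditions. We would first try to reduce them to symmetric-function identities, in the spirit of the $\sigma_2=0$ argument, and fall back on a point-count (Weil-type) estimate for existence if explicit solutions are hard to write down.
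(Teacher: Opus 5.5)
The paper gives no proof of this proposition; it is quoted from the cited 1999 reference, so your argument has to stand on its own. Your reduction to $\sum_j c_jx_j=\sum_j c_jx_j^2=0$ with $c_j\in\mathbb{F}_p^*$ is correct, as is your exclusion of weight $3$ for $p=3$ (the pattern $(1,1,-1)$ forces $x_1x_2=0$). The genuine gaps are in the extended-code part.

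First, your criterion for the extension is wrong. The extension of $c$ has weight $\mathrm{wt}(c)+[\sum_j c_j\neq 0]$. Hence $d_{\mathrm{ext}}=d+1$ holds if and only if \emph{every} weight-$d$ codeword has nonzero coefficient sum, and the upper bound $d+1$ then comes for free from any weight-$d$ word. Zero-sum words of weight $d+1$ play no role. Requiring their absence would even make the claim false for $p\ge5$: any four distinct $x_j\in\mathbb{F}_p^*$ carry such a word, namely the kernel vector of the $3\times4$ Vandermonde matrix with rows $1,x_j,x_j^2$. So the zero-sum words of weight $4$ (for $p\ne3$) and weight $5$ (for $p=3$) that you plan to construct are unnecessary.

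Second, your justification for excluding zero-sum weight-$3$ words when $p\neq3$ does not work. The $\sigma_2$ computation was for the pattern $(1,1,1)$, whose sum is $3\neq0$ when $p\neq3$, so it says nothing about a general $(a,b,c)$ with $a+b+c=0$. The correct argument is that such $(a,b,c)$ would be a nonzero kernel vector of the invertible Vandermonde matrix with rows $(1,1,1)$, $(x_1,x_2,x_3)$, $(x_1^2,x_2^2,x_3^2)$. In your own notation, $a+b+c=0$ turns the quadratic into $-ab(y-1)^2=0$, so $y=1=x_1$.

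Third, the step you flag as the main obstacle is left unproved, although it is elementary and involves no quartic. For $p=3$, a sum of four entries $\pm1$ vanishes mod $3$ only for the pattern $(1,1,-1,-1)$. Then $x_1+x_2=x_3+x_4$ and $x_1^2+x_2^2=x_3^2+x_4^2$ give $x_1x_2=x_3x_4$, so $\{x_1,x_2\}=\{x_3,x_4\}$, contradicting distinctness. A Weil-type point count could never establish this nonexistence anyway.

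Finally, your existence claims rest on character sums and counting arguments whose exclusions you never check, but explicit words are easy to write down.
\begin{itemize}
\item For $p\ge5$, take any three distinct $x_1,x_2,x_3\in\mathbb{F}_p^*$ and set $c_1=x_2x_3(x_3-x_2)$, $c_2=x_1x_3(x_1-x_3)$, $c_3=x_1x_2(x_2-x_1)$. All $c_j$ are nonzero elements of $\mathbb{F}_p$, and both equations hold.
\item For $p=3$, use the pattern $(1,1,1,-1)$, which forces $x_1x_2+x_1x_3+x_2x_3=0$. Take distinct nonzero $x_1,x_2$ with $x_1+x_2\neq0$ (possible since $m\ge2$), and set $x_3=-x_1x_2/(x_1+x_2)$ and $x_4=x_1+x_2+x_3=(x_1-x_2)^2/(x_1+x_2)$. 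One checks $x_4-x_1=x_2^2/(x_1+x_2)$, $x_4-x_2=x_1^2/(x_1+x_2)$ and $x_4-x_3=x_1+x_2$, so the four elements are distinct and nonzero.
\end{itemize}
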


\section{Existence Conditions for Strict Function-Correcting Codes}
\label{sec:existence}

The non-existence results of Theorems~\ref{thm:nonexist1} and~\ref{thm:nonexist2} show that codes with connected distance graphs cannot serve as strict FCCs. We now establish the converse direction: sufficient conditions under which a code \emph{can} serve as a strict FCC with data protection.

\begin{theorem}
\label{thm:existence}
Let $f \colon \mathbb{F}_q^k \to S$ and denote $E = |\mathrm{Im}(f)|$. Let $C \subseteq \mathbb{F}_q^n$ be a code satisfying the following:
\begin{enumerate}
\item[\textup{(C1)}] $d_{\min}(C) = d_d$.
\item[\textup{(C2)}] There exist $E$ pairwise disjoint unions of connected components $\mathcal{C}_1, \ldots, \mathcal{C}_E$ of $G_{d_f - 1}(C)$ such that
\[
|\mathcal{C}_i| = |f^{-1}(a_i)| \quad \text{for some ordering } a_1, \ldots, a_E \in \mathrm{Im}(f).
\]
\end{enumerate}
Then $C$ is an $(f : d_d, d_f)$-FCC.
\end{theorem}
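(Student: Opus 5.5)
We will prove the statement by writing down an explicit encoding $\mathcal{C}_f\colon \mathbb{F}_q^k \to C$ and checking the two distance conditions of Definition~\ref{def:fcc-dp}.

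\textbf{Encoding.} Fix the ordering $a_1,\ldots,a_E$ of $\mathrm{Im}(f)$ given by (C2). For each $i$, the sets $f^{-1}(a_i)$ and $\mathcal{C}_i$ have the same cardinality, so we may fix a bijection
\[
\phi_i\colon f^{-1}(a_i)\to\mathcal{C}_i .
\]
The fibres $f^{-1}(a_1),\ldots,f^{-1}(a_E)$ partition $\mathbb{F}_q^k$. Hence we can define $\mathcal{C}_f(u)=\phi_i(u)$ whenever $f(u)=a_i$.

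Because the sets $\mathcal{C}_i$ are pairwise disjoint, $\mathcal{C}_f$ is injective. Its image is $\bigcup_i \mathcal{C}_i \subseteq C$. We also note that
\[
\sum_i |\mathcal{C}_i| = q^k ,
\]
so if $|C|=q^k$ the $\mathcal{C}_i$ exhaust $C$. We do not need this, since any injective map into $C$ suffices.

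\textbf{Data protection.} Let $u_1\neq u_2$. Then $\mathcal{C}_f(u_1)$ and $\mathcal{C}_f(u_2)$ are distinct codewords of $C$. By (C1),
\[
d\bigl(\mathcal{C}_f(u_1),\mathcal{C}_f(u_2)\bigr)\ge d_{\min}(C)=d_d .
\]

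\textbf{Function protection.} This is the only step with content. Suppose $f(u_1)=a_i\neq a_j=f(u_2)$. Then $c_1=\mathcal{C}_f(u_1)\in\mathcal{C}_i$ and $c_2=\mathcal{C}_f(u_2)\in\mathcal{C}_j$ with $i\neq j$. Since $\mathcal{C}_i$ and $\mathcal{C}_j$ are disjoint unions of whole connected components of $G_{d_f-1}(C)$, the codewords $c_1$ and $c_2$ lie in different connected components.

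In particular, $c_1$ and $c_2$ are not adjacent in $G_{d_f-1}(C)$. By Definition~\ref{def:alpha-graph}, non-adjacency of two distinct vertices means
\[
d(c_1,c_2)>d_f-1, \quad\text{that is,}\quad d(c_1,c_2)\ge d_f .
\]
We should make sure that $G_{d_f-1}(C)$ is well defined, which requires $d_f-1\ge d_{\min}(C)$. If instead $d_f-1<d_d$, the graph has no edges, every codeword is its own component, and the bound $d(c_1,c_2)\ge d_d\ge d_f$ holds trivially. So we either read the definition this way or simply note this degenerate case.

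\textbf{Conclusion.} Both requirements of Definition~\ref{def:fcc-dp} hold, so $C$ is an $(f:d_d,d_f)$-FCC. The only potential obstacle is bookkeeping: the unions $\mathcal{C}_i$ must consist of full components. Splitting a component across two function values would break the separation argument, and (C2) assumes exactly this condition.
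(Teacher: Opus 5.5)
Your proof is correct and follows the paper's argument essentially verbatim: bijections $\varphi_i\colon f^{-1}(a_i)\to\mathcal{C}_i$ define the encoding, (C1) gives data protection, and the absence of edges between distinct unions of components of $G_{d_f-1}(C)$ gives $d\ge d_f$. Your extra remarks are sound additions that the paper leaves implicit: the explicit injectivity check (which the data-protection step relies on) and the degenerate case $d_f-1<d_d$.
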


\begin{proof}
Using the pairwise disjoint unions of connected components $\mathcal{C}_1, \ldots, \mathcal{C}_E \subseteq C$ of $G_{d_f-1}(C)$, we define an encoding $\mathcal{C} \colon \mathbb{F}_q^k \to C$ as follows. Let $a_1, a_2, \ldots, a_E$ be the ordering of $\mathrm{Im}(f)$ considered in~(C2). For each $i \in [E]$, let $\varphi_i \colon f^{-1}(a_i) \to \mathcal{C}_i$ be a bijection. Define
\[
\mathcal{C}(u) = \varphi_i(u), \quad \text{if } f(u) = a_i, \; i \in [E].
\]

Since $d_{\min}(C) = d_d$, the first condition of Definition~\ref{def:fcc-dp} is satisfied.

Now let $u, v \in \mathbb{F}_q^k$ with $f(u) = a_i$ and $f(v) = a_j$, where $i \neq j$. Then $\mathcal{C}(u) = \varphi_i(u) \in \mathcal{C}_i$ and $\mathcal{C}(v) = \varphi_j(v) \in \mathcal{C}_j$. Since $\mathcal{C}_i$ and $\mathcal{C}_j$ are pairwise disjoint unions of connected components of $G_{d_f - 1}(C)$, there is no edge between any vertex in $\mathcal{C}_i$ and any vertex in $\mathcal{C}_j$. Therefore $d(x, y) \geq d_f$ for all $x \in \mathcal{C}_i$ and $y \in \mathcal{C}_j$, and in particular,
\[
d(\mathcal{C}(u), \mathcal{C}(v)) = d(\varphi_i(u), \varphi_j(v)) \geq d_f.
\]
Hence $C$ is an $(f : d_d, d_f)$-FCC.
\end{proof}

Condition~(C2) can equivalently be stated in terms of the function domain partition $\mathcal{P}_f = \{P_1, \ldots, P_E\}$ of $\mathbb{F}_q^k$, where $P_i = f^{-1}(a_i)$. The requirement is that the connected components of $G_{d_f - 1}(C)$ can be grouped into $E$ disjoint collections whose sizes match the block sizes of $\mathcal{P}_f$.

\begin{example}
\label{ex:existence}
Consider the code $C = \{000000000,\allowbreak\, 100110110,\allowbreak\, 010101110,\allowbreak\, 001011110,\allowbreak\, 110011101,\allowbreak\, 101101101,\allowbreak\, 011110101,\allowbreak\, 111000011\} \subseteq \mathbb{F}_2^9$ with minimum distance $d_d = 4$. The $\alpha$-distance graph $G_4(C)$ has four connected components:
\begin{align*}
\mathcal{S}_1 &= \{000000000\}, &
\mathcal{S}_2 &= \{100110110,\, 010101110,\, 001011110\}, \\
\mathcal{S}_3 &= \{110011101,\, 101101101,\, 011110101\}, &
\mathcal{S}_4 &= \{111000011\},
\end{align*}
as shown in Fig.~\ref{fig:Graph}.

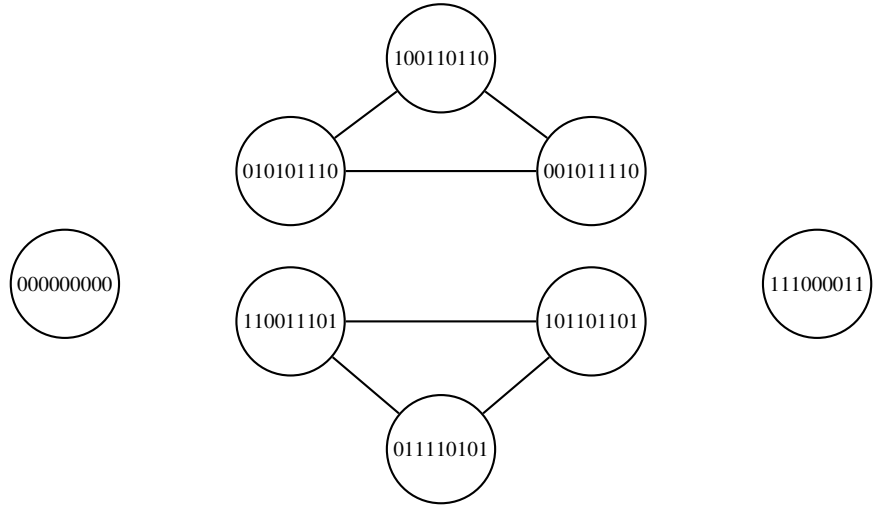
\begin{figure}[t]
\centering
\begin{tikzpicture}[
  vertex/.style={
    circle,
    draw,
    thick,
    minimum size=9mm,
    inner sep=2pt,
    font=\footnotesize
  }
]

\node[vertex] (a) at (0,3) {100110110};
\node[vertex] (b) at (-2,1.5) {010101110};
\node[vertex] (c) at (2,1.5) {001011110};

\draw[thick] (a)--(b)--(c)--(a);

\node[vertex] (d) at (-2,-0.5) {110011101};
\node[vertex] (e) at (2,-0.5) {101101101};
\node[vertex] (f) at (0,-2.2) {011110101};

\draw[thick] (d)--(e)--(f)--(d);

\node[vertex] (g) at (-5,0) {000000000};
\node[vertex] (h) at (5,0) {111000011};

\end{tikzpicture}
\caption{The distance graph $G_4(C)$ for Example~\ref{ex:existence}.}
\label{fig:Graph}
\end{figure}

Now consider the function $f \colon \mathbb{F}_2^3 \to \mathbb{F}_2^2$ defined by
\[
f(x_1, x_2, x_3) = (x_1 x_2,\; x_3(1 + x_1 x_2)).
\]
The function has three distinct values in its image: $00$, $01$, and $10$. The preimage sizes are $|f^{-1}(00)| = 3$, $|f^{-1}(01)| = 3$, and $|f^{-1}(10)| = 2$. We verify conditions~\textup{(C1)} and~\textup{(C2)}: the minimum distance is $d_{\min}(C) = 4 = d_d$, and the components can be grouped as $\mathcal{C}_1 = \mathcal{S}_2$ with $|\mathcal{C}_1| = 3 = |f^{-1}(00)|$, $\mathcal{C}_2 = \mathcal{S}_3$ with $|\mathcal{C}_2| = 3 = |f^{-1}(01)|$, and $\mathcal{C}_3 = \mathcal{S}_1 \cup \mathcal{S}_4$ with $|\mathcal{C}_3| = 2 = |f^{-1}(10)|$.
Therefore, $C$ is an $(f : 4, 5)$-FCC. An explicit assignment of messages to codewords is given in Table~\ref{tab:example}.

\begin{table}[t]
\centering
\renewcommand{\arraystretch}{1.15}
\caption{Codeword assignment for Example~\ref{ex:existence}.}
\label{tab:example}
\begin{tabular}{ccc}
\hline
$x \in \mathbb{F}_2^3$ & $f(x) \in \mathbb{F}_2^2$ & Assigned codeword \\
\hline
$000$ & $00$ & $100110110$ \\
$100$ & $00$ & $010101110$ \\
$010$ & $00$ & $001011110$ \\
$001$ & $01$ & $110011101$ \\
$011$ & $01$ & $101101101$ \\
$101$ & $01$ & $011110101$ \\
$110$ & $10$ & $000000000$ \\
$111$ & $10$ & $111000011$ \\
\hline
\end{tabular}
\end{table}
\end{example}

\subsection{Cayley Graph Characterization for Linear Codes}
\label{sec:cayley}

For linear codes, the $\alpha$-distance graph admits a clean algebraic characterization that simplifies the existence conditions of Theorem~\ref{thm:existence}.

\begin{lemma}
\label{lem:cayley}
Let $C \subseteq \mathbb{F}_q^n$ be a linear code. For $\alpha \geq 0$, define
\[
S_\alpha = \{c \in C : 0 < \mathrm{wt}(c) \leq \alpha\}.
\]
Then $G_\alpha(C) \cong \mathrm{Cay}(C, S_\alpha)$.
\end{lemma}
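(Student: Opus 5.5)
The isomorphism will be the identity map on the common vertex set $C$; it then suffices to show that the two graphs have exactly the same edges. There are three things to check: $\mathrm{Cay}(C,S_\alpha)$ is well defined, adjacency in $G_\alpha(C)$ matches adjacency in $\mathrm{Cay}(C,S_\alpha)$, and no edge cases (loops, small $\alpha$) break the identification.

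\emph{Well-definedness.} View $C$ as a group under vector addition; it is a group because $C$ is linear, and its identity is the zero codeword. By definition $0\notin S_\alpha$, since every element of $S_\alpha$ has positive weight. $S_\alpha$ is also closed under inverses: for $c\in S_\alpha$ we have $-c=(-1)c\in C$ by linearity, and $\mathrm{wt}(-c)=\mathrm{wt}(c)$ because negation preserves supports. So $\mathrm{Cay}(C,S_\alpha)$ is a well-defined undirected graph in the sense of Definition~\ref{def:cayley}.

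\emph{Matching edges.} The key fact is translation invariance of the Hamming metric, $d(c_1,c_2)=\mathrm{wt}(c_1-c_2)$, together with $c_1-c_2\in C$ by linearity. Take distinct $c_1,c_2\in C$. Then $c_1-c_2\neq 0$, so $\mathrm{wt}(c_1-c_2)>0$. Hence
\[
c_1\sim c_2 \text{ in } G_\alpha(C) \iff d(c_1,c_2)\le\alpha \iff 0<\mathrm{wt}(c_1-c_2)\le\alpha \iff c_1-c_2\in S_\alpha,
\]
and the last condition is exactly adjacency in $\mathrm{Cay}(C,S_\alpha)$.

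\emph{Edge cases.} Neither graph has loops: $G_\alpha(C)$ only joins distinct vertices, and $0\notin S_\alpha$. The lemma allows any $\alpha\ge 0$, although Definition~\ref{def:alpha-graph} uses $\alpha\ge d_{\min}(C)$; the argument above never uses that lower bound, so it covers all $\alpha\ge 0$. When $\alpha<d_{\min}(C)$, both graphs are simply edgeless. No step here is a real obstacle; the only point needing care is the closure of $S_\alpha$ under negation over general $\mathbb{F}_q$, and that follows from scalar invariance of weight as shown above.
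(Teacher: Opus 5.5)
Your proposal is correct and follows essentially the same route as the paper: you take the identity map on $C$ and use $d(c_1,c_2)=\mathrm{wt}(c_1-c_2)$ with $c_1-c_2\in C$ to show the two edge sets coincide. Your extra check that $0\notin S_\alpha$ and that $S_\alpha$ is closed under negation is a detail the paper leaves implicit, and it is exactly what makes $\mathrm{Cay}(C,S_\alpha)$ a well-defined undirected graph under Definition~\ref{def:cayley}.
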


\begin{proof}
Since $C$ is linear, for any $x, y \in C$ we have $d(x, y) = \mathrm{wt}(x - y)$. Two distinct codewords $x, y \in C$ are adjacent in $G_\alpha(C)$ if and only if $d(x, y) \leq \alpha$, which holds if and only if $\mathrm{wt}(x - y) \leq \alpha$, i.e., $x - y \in S_\alpha$. This is equivalent to $y = x + s$ for some $s \in S_\alpha$, which is precisely the adjacency condition in $\mathrm{Cay}(C, S_\alpha)$. Since both graphs have the same vertex set and the same edge set, they are identical.
\end{proof}

By Proposition~\ref{prop:cayley-components}, if $S_\alpha$ does not generate $C$, then the connected components of $G_\alpha(C)$ are exactly the cosets of the subcode $\langle S_\alpha \rangle$ in $C$. This yields the following.

\begin{corollary}
\label{cor:linear-components}
Let $C$ be an $[n, k, d]_q$ linear code, and let $\alpha \geq d$. The connected components of $G_\alpha(C)$ are the cosets of the subcode $\langle S_\alpha \rangle$ in $C$. In particular, all connected components have the same size $\gamma = |\langle S_\alpha \rangle|$, and the number of connected components is $q^k / \gamma$. Moreover, $G_\alpha(C)$ is disconnected if and only if $\langle S_\alpha \rangle \subsetneq C$.
\end{corollary}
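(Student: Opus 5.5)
The plan is to combine Lemma~\ref{lem:cayley} with a direct description of the connected components of a Cayley graph on an abelian group. I will not rely on Proposition~\ref{prop:cayley-components} alone, since it is stated only for the case where $S$ fails to generate $G$; the argument will cover both cases at once. By Lemma~\ref{lem:cayley}, $G_\alpha(C)$ coincides with $\mathrm{Cay}(C,S_\alpha)$. The set $S_\alpha$ excludes $0$, and it is closed under negation because $\mathrm{wt}(-c)=\mathrm{wt}(c)$ and $C$ is linear, so the Cayley graph is well defined and undirected. Let $D=\langle S_\alpha\rangle$. Since $C$ is a vector space over $\mathbb{F}_q$ and the additive subgroup generated by $S_\alpha$ is closed under scalar multiplication ($\lambda s$ is a sum of copies of $s$ when $q$ is prime; in general $\lambda s$ has the same weight as $s$, so $\lambda s\in S_\alpha$ for $\lambda\neq 0$), $D$ is a linear subcode of $C$.

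The main step is to show that the component of a vertex $x$ is exactly the coset $x+D$. For one inclusion, take a path $x=x_0,x_1,\dots,x_t=y$. Each difference $x_{i}-x_{i-1}$ lies in $S_\alpha$, so $y-x$ is a sum of elements of $S_\alpha$ and lies in $D$. For the other inclusion, take $y\in x+D$ and write $y-x=s_1+\cdots+s_t$ with $s_i\in S_\alpha$; inverses are not needed because $S_\alpha=-S_\alpha$. Then the vertices $x,\ x+s_1,\ x+s_1+s_2,\dots,y$ form a walk in the graph, since consecutive vertices differ by an element of $S_\alpha$. Dropping repeated vertices, which occur for instance when a partial sum vanishes, turns this walk into a path from $x$ to $y$.

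The remaining claims then follow directly. Cosets all have size $|D|=\gamma$, and they partition $C$, so there are $q^k/\gamma$ components. The graph is disconnected exactly when there is more than one coset, that is, when $D\subsetneq C$. The hypothesis $\alpha\ge d$ guarantees that $S_\alpha\neq\emptyset$, so $D$ is nontrivial, although the argument does not actually use this.

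I do not expect a real obstacle. The only points needing care are that the statement also covers the connected case, which the cited proposition does not, and that $D$ is a subcode (a linear subspace) rather than merely an additive subgroup.
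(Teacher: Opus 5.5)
Your proof is correct and follows the same route as the paper, which simply combines Lemma~\ref{lem:cayley} with the coset description of Cayley-graph components from Proposition~\ref{prop:cayley-components}; the only difference is that you prove that description directly instead of citing it. This makes your version slightly more careful than the paper's, since it also covers the connected case (which the proposition as stated excludes) and checks that $\langle S_\alpha\rangle$ is an $\mathbb{F}_q$-linear subcode rather than only an additive subgroup, two points the paper leaves implicit.
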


We illustrate this characterization with two examples.

\begin{example}
\label{ex:cayley-binary}
Let $C \subseteq \mathbb{F}_2^6$ be the linear code with generator matrix
\[
G = \begin{pmatrix}
1 & 1 & 0 & 0 & 0 & 0 \\
0 & 0 & 1 & 1 & 0 & 0 \\
0 & 0 & 0 & 1 & 1 & 1
\end{pmatrix}.
\]
Then $|C| = 8$ and $d_{\min}(C) = 2$. For $\alpha = 2$, we have $S_2 = \{110000,\, 001100\}$ and $\langle S_2 \rangle = \{000000,\, 110000,\, 001100,\, 111100\}$, which is a subcode of dimension~$2$. The two cosets of $\langle S_2 \rangle$ in $C$ are
\begin{align*}
\langle S_2 \rangle &= \{000000,\, 110000,\, 001100,\, 111100\}, \\
000111 + \langle S_2 \rangle &= \{000111,\, 110111,\, 001011,\, 111011\},
\end{align*}
and $G_2(C)$ consists of two connected components, each forming a 4-cycle, as shown in Fig.~\ref{fig:cayley-binary}.

\begin{figure}[t]
\centering
\begin{tikzpicture}[scale=1.05,
  every node/.style={circle, draw, inner sep=2pt, font=\small},
  edge/.style={thick}
]
  \node (a) at (0,0) {$000000$};
  \node (b) at (2,0) {$110000$};
  \node (c) at (2,2) {$111100$};
  \node (d) at (0,2) {$001100$};
  \draw[edge] (a)--(b)--(c)--(d)--(a);

  \node (e) at (5,0) {$000111$};
  \node (f) at (7,0) {$110111$};
  \node (g) at (7,2) {$111011$};
  \node (h) at (5,2) {$001011$};
  \draw[edge] (e)--(f)--(g)--(h)--(e);

  \node[draw=none, rectangle, font=\normalsize] at (1,-0.9) {$\langle S_2\rangle$};
  \node[draw=none, rectangle, font=\normalsize] at (6,-0.9) {$000111+\langle S_2\rangle$};
\end{tikzpicture}
\caption{The distance graph $G_2(C)$ for Example~\ref{ex:cayley-binary}.}
\label{fig:cayley-binary}
\end{figure}

The code $C$ can serve as a strict $(f : 2, 3)$-FCC for any function $f \colon \mathbb{F}_2^3 \to \{0, 1\}$ whose preimage sets both have size~$4$.
\end{example}

\begin{example}
\label{ex:cayley-ternary}
Let $q = 3$, $n = 4$, and let $C \subseteq \mathbb{F}_3^4$ be the linear code generated by $g_1 = (1,1,0,0)$ and $g_2 = (0,1,1,1)$. Then
\[
C = \langle g_1, g_2 \rangle = \{a g_1 + b g_2 : a, b \in \mathbb{F}_3\} = \{(a,\, a+b,\, b,\, b) : a, b \in \mathbb{F}_3\},
\]
so $|C| = 9$ and $d_{\min}(C) = 2$. For $\alpha = 2$, we have $S_2 = \{(1,1,0,0),\, (2,2,0,0)\}$ and $\langle S_2 \rangle = \{(0,0,0,0),\, (1,1,0,0),\, (2,2,0,0)\}$. The three cosets of $\langle S_2 \rangle$ in $C$ are
\begin{align*}
\mathcal{C}_0 &= \{0000,\, 1100,\, 2200\}, \\
\mathcal{C}_1 &= \{0111,\, 1211,\, 2011\}, \\
\mathcal{C}_2 &= \{0222,\, 1022,\, 2122\},
\end{align*}
and $G_2(C)$ has three connected components, each forming a triangle, as shown in Fig.~\ref{fig:cayley-ternary}.

\begin{figure}[t]
\centering
\begin{tikzpicture}[scale=1.05,
  every node/.style={circle, draw, inner sep=2pt, font=\small},
  edge/.style={thick}
]
  \node (a0) at (0,0) {$0000$};
  \node (a1) at (1.8,0) {$1100$};
  \node (a2) at (0.9,1.6) {$2200$};
  \draw[edge] (a0)--(a1)--(a2)--(a0);
  \node[draw=none, rectangle, font=\normalsize] at (0.9,-0.8) {$\mathcal{C}_0$};
  \node (b0) at (4.2,0) {$0111$};
  \node (b1) at (6.0,0) {$1211$};
  \node (b2) at (5.1,1.6) {$2011$};
  \draw[edge] (b0)--(b1)--(b2)--(b0);
  \node[draw=none, rectangle, font=\normalsize] at (5.1,-0.8) {$\mathcal{C}_1$};
  \node (c0) at (8.4,0) {$0222$};
  \node (c1) at (10.2,0) {$1022$};
  \node (c2) at (9.3,1.6) {$2122$};
  \draw[edge] (c0)--(c1)--(c2)--(c0);
  \node[draw=none, rectangle, font=\normalsize] at (9.3,-0.8) {$\mathcal{C}_2$};
\end{tikzpicture}
\caption{The distance graph $G_2(C)$ for Example~\ref{ex:cayley-ternary}.}
\label{fig:cayley-ternary}
\end{figure}
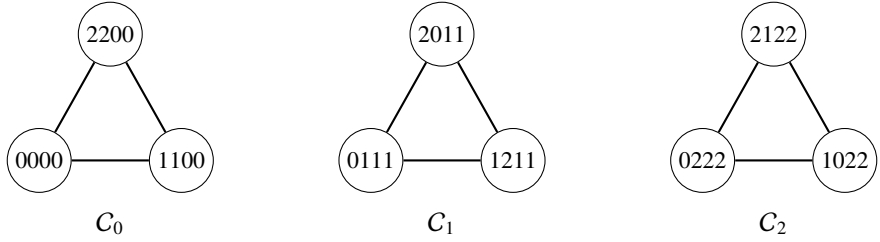

The code $C$ can serve as a strict $(f : 2, 3)$-FCC for any function $f \colon \mathbb{F}_3^2 \to S$ with $|S| \leq 3$ and each preimage of size~$3$.
\end{example}

For strict $(f : d_d, d_f)$-FCCs with $d_d = d_{\min}(C)$, the relevant graph is $G_{d_f - 1}(C)$. Applying Corollary~\ref{cor:linear-components} with $\alpha = d_f - 1$, the conditions of Theorem~\ref{thm:existence} simplify as follows.

\begin{corollary}
\label{cor:linear-existence}
Let $C$ be an $[n, k, d]_q$ linear code, let $f \colon \mathbb{F}_q^k \to S$ with $E = |\mathrm{Im}(f)|$, and let $d_f > d = d_d$. Define $\gamma = |\langle S_{d_f - 1} \rangle|$. Then $C$ can be used as a strict $(f : d, d_f)$-FCC if:
\begin{enumerate}
\item[\textup{(L1)}] $q^k / \gamma \geq E$.
\item[\textup{(L2)}] Each $|f^{-1}(a)|$ is a multiple of $\gamma$.
\end{enumerate}
\end{corollary}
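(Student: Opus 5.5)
The plan is to reduce the corollary to Theorem~\ref{thm:existence} by verifying its conditions (C1) and (C2) for $\alpha = d_f - 1$, using the coset description of Corollary~\ref{cor:linear-components}.

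Condition (C1) holds by hypothesis, since $d_{\min}(C) = d = d_d$. We may also assume $\alpha = d_f - 1 \geq d$, as required in Definition~\ref{def:alpha-graph} and Corollary~\ref{cor:linear-components}; this follows from $d_f > d$. Applying Corollary~\ref{cor:linear-components}, the connected components of $G_{d_f-1}(C)$ are exactly the $q^k/\gamma$ cosets of the subcode $D = \langle S_{d_f-1}\rangle$. Each coset has size $\gamma$.

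For (C2), fix an ordering $a_1,\dots,a_E$ of $\mathrm{Im}(f)$. By (L2), write $|f^{-1}(a_i)| = m_i\gamma$ with $m_i \geq 1$ an integer; $m_i \geq 1$ because each $a_i$ lies in the image. The preimages partition $\mathbb{F}_q^k$, so
\[
\sum_{i=1}^{E} m_i\gamma = q^k, \qquad\text{that is,}\qquad \sum_{i=1}^{E} m_i = q^k/\gamma,
\]
which is exactly the number of cosets. Enumerate the cosets as $D_1,\dots,D_{q^k/\gamma}$ and assign consecutive blocks:
\[
\mathcal{C}_i = D_{M_{i-1}+1}\cup\dots\cup D_{M_i}, \qquad M_i = m_1+\dots+m_i .
\]
These $\mathcal{C}_i$ are pairwise disjoint unions of connected components, and $|\mathcal{C}_i| = m_i\gamma = |f^{-1}(a_i)|$. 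Hence (C2) holds, and Theorem~\ref{thm:existence} gives that $C$ is an $(f:d,d_f)$-FCC, which is strict since $d_f > d$.

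Condition (L1) is in fact implied by (L2) together with $m_i \geq 1$. It would only matter if one wanted to use just some of the components, and I will note this in a remark rather than rely on it. There is no serious obstacle here. The only point needing care is the counting identity $\sum_i m_i = q^k/\gamma$: it guarantees the cosets are used up exactly, with none left over or missing, which is what makes the encoding a bijection onto $C$.
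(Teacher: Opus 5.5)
Your proof is correct and follows the route the paper intends. The paper obtains this corollary by applying Corollary~\ref{cor:linear-components} with $\alpha = d_f - 1$ to check (C1) and (C2) of Theorem~\ref{thm:existence}, which is exactly your grouping of the $q^k/\gamma$ cosets of size $\gamma$ into blocks matching the preimage sizes. Your remark that (L1) already follows from (L2) is also correct: each nonempty preimage has size at least $\gamma$, and the preimages partition $\mathbb{F}_q^k$.
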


Corollary~\ref{cor:linear-existence} transforms the existence problem for strict FCCs from linear codes into a \emph{subcode generation problem}: one needs to find a linear code $C$ with $d_{\min}(C) = d_d$ such that the subcode $\langle S_{d_f - 1} \rangle$ generated by codewords of weight at most $d_f - 1$ is a proper subcode of $C$, with an index that is compatible with the preimage structure of the desired function.

Note that if $C$ is generated by its minimum-weight codewords, i.e., $\langle S_d \rangle = C$, then $\langle S_\alpha \rangle = C$ for all $\alpha \geq d$, and $G_\alpha(C)$ is connected for all $\alpha \geq d$. By Theorem~\ref{thm:nonexist1}, such a code cannot serve as a strict $(f : d, d_f)$-FCC for any nontrivial function. Therefore, a necessary condition for a linear code to be usable as a strict FCC is that it is \emph{not} generated by its minimum-weight codewords. The subsequent sections develop methods for constructing codes that satisfy this condition.

\section{Converse of Simonis's Theorem}
\label{sec:reverse}

Simonis~\cite{S1992} proved that any linear code of dimension $k$ and minimum weight $d$ can be transformed into a code with the same parameters whose generator matrix consists entirely of weight-$d$ codewords. In other words, every linear code can be transformed into one that is generated by its minimum-weight codewords. As observed in Section~\ref{sec:cayley}, such codes cannot serve as strict FCCs since their distance graphs are connected. In this section, we develop a converse construction: under certain conditions on the weight distribution, a linear code can be transformed into a new code with the same parameters but fewer independent minimum-weight codewords. The resulting codes have disconnected distance graphs and are therefore suitable for use as strict FCCs.

Throughout this section, let $C$ be an $[n, k, d]_q$ code with $t > 1$ independent minimum-weight codewords $a_1, \ldots, a_t$, and let $\{a_1, \ldots, a_t, e_1, \ldots, e_{k-t}\}$ be a basis of $C$ where $\mathrm{wt}(a_i) = d$ for all $i \in [t]$ and $\mathrm{wt}(e_j) > d$ for all $j \in [k-t]$.

\subsection{One-Position Insertion}
\label{sec:reverse-one}

\begin{theorem}
\label{thm:reverse1}
Let $C$ be an $[n, k, d]_q$ code with $t > 1$ independent minimum-weight codewords and $k,d \geq 2$. Suppose:
\begin{enumerate}
\item[\textup{(i)}] $A_d(C) = t(q - 1)$.
\item[\textup{(ii)}] $A_{d+1}(C) = 0$.
\end{enumerate}
Then there exists an $[n, k, d]_q$ code $D$ with exactly $t - 1$ independent minimum-weight codewords.
\end{theorem}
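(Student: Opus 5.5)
The plan is to build $D$ by modifying one coordinate so that a single minimum-weight basis vector gains weight. Condition (i) says the weight-$d$ codewords of $C$ are exactly the nonzero scalar multiples $\lambda a_i$, $\lambda \in \mathbb{F}_q^*$, $i \in [t]$. Each $a_i$ has $d<n$ nonzero coordinates, so it has a zero somewhere. The natural idea is to pick a coordinate $p$ with $(a_t)_p = 0$ and change the $p$-th entry of $a_t$ to a nonzero value, keeping the rest of the basis unchanged. The length stays $n$, so this is a replacement in an existing coordinate, which I read as what ``one-position insertion'' means. Concretely, let $D$ be generated by $a_1,\ldots,a_{t-1}$, $a_t' = a_t + \epsilon_p$, and $e_1,\ldots,e_{k-t}$, where $\epsilon_p$ is the $p$-th unit vector. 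Equivalently, $D = \{c + \mu(c)\epsilon_p : c \in C\}$, where $\mu\colon C\to\mathbb{F}_q$ is the linear functional giving the coefficient of $a_t$ in the chosen basis.

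\textbf{Step 1 (dimension).} The map $c \mapsto c + \mu(c)\epsilon_p$ is linear. It is injective as long as $\epsilon_p \notin C$, and this holds because $d \geq 2$. Hence $\dim D = k$.

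\textbf{Step 2 (distance).} Each codeword of $D$ differs from the corresponding $c\in C$ in at most coordinate $p$, so its weight is at least $\mathrm{wt}(c)-1$. The only way to get weight $< d$ is to start from some $c$ with $\mathrm{wt}(c) = d$ and lose a nonzero entry at $p$.
\begin{itemize}
\item If $\mathrm{wt}(c)=d$, then $c = \lambda a_i$. For $i<t$ we have $\mu(c)=0$, so $c$ is unchanged. For $i=t$ the entry at $p$ goes from $0$ to $\lambda \neq 0$, so the weight becomes $d+1$.
\item If $\mathrm{wt}(c) = d+1$, condition (ii) rules this case out, since $A_{d+1}(C)=0$.
\item If $\mathrm{wt}(c)\ge d+2$, the new weight is at least $d+1$.
\end{itemize}
So $d_{\min}(D) = d$, and $D$ is an $[n,k,d]_q$ code, using $t>1$ so that $a_1$ survives at weight $d$.

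\textbf{Step 3 (counting minimum-weight codewords).} The bullets above show the weight-$d$ codewords of $D$ are exactly the multiples of $a_1,\ldots,a_{t-1}$. In the second and third bullets the new weight is at least $\mathrm{wt}(c)-1 \geq d+1$, because weight $d+1$ is absent from $C$. The $\lambda a_t$ now have weight $d+1$. Therefore $A_d(D) = (t-1)(q-1)$, and the minimum-weight codewords span a space of dimension exactly $t-1$.

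The main subtle point is Step 2's use of (ii). Without it, a weight-$(d+1)$ codeword of $C$ that is nonzero at $p$ and has $\mu(c)$ equal to minus that entry could drop to weight $d$. That would create a new minimum-weight codeword, possibly independent of $a_1,\ldots,a_{t-1}$. Condition (ii) removes this danger outright, so no careful choice of $p$ is needed. I would double-check only that $d\ge2$ gives injectivity and that $k\ge2$ together with $t>1$ is consistent; both are immediate.
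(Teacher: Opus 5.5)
Your proposal is correct and follows essentially the same route as the paper. The paper also replaces $a_t$ by $a_t+\alpha\mathbf{e}_p$ with $p\notin\mathrm{supp}(a_t)$ (you take $\alpha=1$) and obtains $\dim D=k$ because a weight-one vector cannot lie in $C$; it then uses (i) and (ii) to show that the weight-$d$ codewords of $D$ are exactly the multiples of $a_1,\ldots,a_{t-1}$, while every codeword involving $b$ has weight at least $d+1$. Your case split by $\mathrm{wt}(c)$, rather than by the coefficient pattern ($\mu=0$; $\mu\neq0$ alone; $\mu\neq0$ with others), is only a cosmetic reorganization. The existence of $p$ (i.e.\ $d<n$), which you assert without justification, follows from the Singleton bound since $k\ge 2$, as the paper notes.
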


\begin{proof}
Condition~(i) asserts that every weight-$d$ codeword of $C$ is a scalar multiple of one of the $t$ independent weight-$d$ basis vectors $a_1, \ldots, a_t$.

 Choose a position $p \notin \mathrm{supp}(a_t)$, which exists since $\mathrm{wt}(a_t) = d < n$ for any code with $k \geq 2$, and a nonzero scalar $\alpha \in \mathbb{F}_q^*$. Define $b = a_t + \alpha \mathbf{e}_p$, where $\mathbf{e}_p$ is the standard basis vector with $1$ at position~$p$. Then $\mathrm{wt}(b) = d + 1$. Define the code
\[
D = \langle a_1, \ldots, a_{t-1}, b, e_1, \ldots, e_{k-t} \rangle.
\]

Every codeword $v \in D$ can be written as $v = \sum_{i=1}^{t-1} \lambda_i a_i + \mu b + \sum_{j=1}^{k-t} \nu_j e_j$, with a corresponding codeword $w = \sum_{i=1}^{t-1} \lambda_i a_i + \mu a_t + \sum_{j=1}^{k-t} \nu_j e_j \in C$. Since $b$ and $a_t$ differ only at position~$p$, we have $v_p = w_p + \mu \alpha$ and $v_i = w_i$ for all $i \neq p$.

\textbf{Claim 1:} $\dim(D) = k$. If $b$ were linearly dependent on $\{a_1, \ldots, a_{t-1}, e_1, \ldots, e_{k-t}\}$, then $b \in C$, and $b - a_t = \alpha \mathbf{e}_p \in C$ would be a codeword of weight~$1 < d$, a contradiction.

\textbf{Claim 2:} $d_{\min}(D) = d$. Consider a nonzero codeword $v \in D$ with corresponding $w \in C$.
\begin{itemize}
\item \emph{Case A} ($\mu = 0$): $v = w \in C$, so $\mathrm{wt}(v) \geq d$.
\item \emph{Case B} ($\mu \neq 0$, all $\lambda_i = 0$, all $\nu_j = 0$): $v = \mu b$, so $\mathrm{wt}(v) = d + 1$.
\item \emph{Case C} ($\mu \neq 0$, at least one $\lambda_i$ or $\nu_j$ nonzero): The codeword $w \in C$ is nonzero. Since $\mu \neq 0$ and at least one other coefficient is nonzero, $w$ is not a scalar multiple of any single~$a_i$. By condition~(i), $\mathrm{wt}(w) \neq d$. Since $A_{d+1}(C) = 0$, we have $\mathrm{wt}(w) \geq d + 2$. As $v$ and $w$ differ in at most one coordinate, $\mathrm{wt}(v) \geq \mathrm{wt}(w) - 1 \geq d + 1$.
\end{itemize}
In all cases $\mathrm{wt}(v) \geq d$, and the scalar multiples of $a_1$ still achieve weight exactly~$d$, so $d_{\min}(D) = d$.

\textbf{Claim 3:} $D$ has exactly $t - 1$ independent weight-$d$ codewords. From Cases~B and~C, $\mathrm{wt}(v) \geq d + 1$ whenever $\mu \neq 0$. Hence every weight-$d$ codeword of $D$ has $\mu = 0$, meaning $v = w \in C$ is a weight-$d$ codeword of $C$, and therefore a scalar multiple of some $a_i$ with $i \leq t - 1$.
\end{proof}

 A sufficient condition for~(i) is $d_2(C) \geq 2d$, where $d_2(C)$ denotes the second generalized Hamming weight of $C$. Indeed, if $d_2(C) \geq 2d$, then any two independent weight-$d$ codewords $a_i, a_j$ span a $2$-dimensional subcode with support of size at least $2d$, forcing $\mathrm{supp}(a_i) \cap \mathrm{supp}(a_j) = \emptyset$. With disjoint supports, any nontrivial linear combination of two or more weight-$d$ basis vectors has weight at least $2d > d$, so no such combination can have weight~$d$.

\subsection{Two-Position Insertion}
\label{sec:reverse-two}

By strengthening the conditions on the weight distribution, we can additionally ensure that the constructed code has no codewords of weight $d + 1$. This is significant for the FCC application, as it results in a larger gap between the data protection distance and the function protection distance.

\begin{theorem}
\label{thm:reverse2}
Let $C$ be an $[n, k, d]_q$ code with $n,d \geq 3$ and $t > 1$ independent minimum-weight codewords. Suppose:
\begin{enumerate}
\item[\textup{(i)}] $A_d(C) = t(q - 1)$.
\item[\textup{(ii)}] $A_{d+1}(C) = A_{d+2}(C) = A_{d+3}(C) = 0$.
\end{enumerate}
Then there exists an $[n, k, d]_q$ code $D$ with exactly $t - 1$ independent minimum-weight codewords and $A_{d+1}(D) = 0$. In particular, $\dim \langle S_{d+1}(D) \rangle = t - 1$.
\end{theorem}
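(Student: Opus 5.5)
The plan mirrors the proof of Theorem~\ref{thm:reverse1}, except that we perturb $a_t$ in two positions instead of one. Since $n \geq 3$ and we may assume $d \le n-2$, we choose two distinct positions $p_1, p_2 \notin \mathrm{supp}(a_t)$ and nonzero scalars $\alpha_1, \alpha_2 \in \mathbb{F}_q^*$. We set $b = a_t + \alpha_1 \mathbf{e}_{p_1} + \alpha_2 \mathbf{e}_{p_2}$, so that $\mathrm{wt}(b) = d+2$, and define
\[
D = \langle a_1, \ldots, a_{t-1}, b, e_1, \ldots, e_{k-t} \rangle .
\]
As before, each $v \in D$ with coefficients $(\lambda, \mu, \nu)$ has a companion $w \in C$ with the same coefficients, and $v$ and $w$ agree outside $\{p_1, p_2\}$. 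Hence $|\mathrm{wt}(v) - \mathrm{wt}(w)| \le 2$.

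\emph{Dimension.} If $b$ were in the span of the other generators, then $b \in C$ and $b - a_t = \alpha_1\mathbf{e}_{p_1}+\alpha_2\mathbf{e}_{p_2}$ would be a codeword of weight $2 < d$, since $d \ge 3$. So $\dim D = k$.

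\emph{Weights.} We split into the same three cases as in Theorem~\ref{thm:reverse1}.
\begin{itemize}
\item If $\mu = 0$, then $v = w \in C$, so $\mathrm{wt}(v) \ge d$, with equality only for scalar multiples of $a_1, \ldots, a_{t-1}$. Moreover $\mathrm{wt}(v) \ne d+1$ because $A_{d+1}(C) = 0$.
\item If $\mu \ne 0$ and all other coefficients vanish, then $\mathrm{wt}(v) = d+2$.
\item If $\mu \ne 0$ and some other coefficient is nonzero, then $w \ne 0$ is not a multiple of a single $a_i$. By (i), $\mathrm{wt}(w) \ne d$, and by (ii), $\mathrm{wt}(w) \ge d+4$. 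Therefore $\mathrm{wt}(v) \ge \mathrm{wt}(w) - 2 \ge d+2$.
\end{itemize}
Together these give $d_{\min}(D) = d$, attained by $a_1$, and $A_{d+1}(D) = 0$. Every weight-$d$ codeword of $D$ has $\mu = 0$, so it is a multiple of some $a_i$ with $i \le t-1$. Hence $D$ has exactly $t-1$ independent minimum-weight codewords.

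\emph{The subcode $\langle S_{d+1}(D) \rangle$.} Since there are no codewords of weight $d+1$, we have $S_{d+1}(D) = S_d(D)$, which consists of the nonzero multiples of $a_1, \ldots, a_{t-1}$. Its span is $\langle a_1, \ldots, a_{t-1} \rangle$, which has dimension $t-1$.

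The main obstacle is the existence of two positions outside $\mathrm{supp}(a_t)$, that is, showing $n - d \ge 2$. If $k \ge 2$, take any other basis vector $c$. Suppose $|\mathrm{supp}(c) \setminus \mathrm{supp}(a_t)| \le 1$. We then want a combination $c + \lambda a_t$ of small weight. Averaging over $\lambda \in \mathbb{F}_q$ only yields a bound of about $d+1$, which, combined with $A_{d+1}(C) = A_{d+2}(C) = A_{d+3}(C) = 0$ and condition (i), should give a contradiction. The intended argument is this: since $t > 1$, $a_1$ and $a_t$ are independent weight-$d$ codewords, and any combination $a_1 + \lambda a_t$ has weight at least $d+4$ by (i) and (ii). This forces $|\mathrm{supp}(a_1) \cup \mathrm{supp}(a_t)| \ge d+4$, so $n - d \ge 4 \ge 2$, which settles the existence of $p_1$ and $p_2$.
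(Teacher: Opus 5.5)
Your proof is correct and follows the paper's construction and case analysis essentially verbatim: the same $b = a_t + \alpha_1\mathbf{e}_{p_1} + \alpha_2\mathbf{e}_{p_2}$, the same three cases on $\mu$, and the same observation that $S_{d+1}(D) = S_d(D)$. Your justification that two positions outside $\mathrm{supp}(a_t)$ exist is tighter than the paper's: from $\mathrm{wt}(a_1 + \lambda a_t) \ge d+4$ you get $|\mathrm{supp}(a_1)\setminus\mathrm{supp}(a_t)| \ge 4$, whereas the paper appeals to ``$n \ge d+2$ for $k,d\ge 3$'', which does not cover $k=2$. Keep that argument and delete the abandoned averaging remark.
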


\begin{proof}
Condition~(i) ensures that every weight-$d$ codeword of $C$ is a scalar multiple of one of the $t$ independent weight-$d$ basis vectors $a_1, \ldots, a_t$.

 Choose two positions $p_1, p_2 \notin \mathrm{supp}(a_t)$, which exist since $\mathrm{wt}(a_t) = d$ and $n \geq d + 2$ for any code with $k, d \geq 3$, and nonzero scalars $\alpha_1, \alpha_2 \in \mathbb{F}_q^*$. Define $b = a_t + \alpha_1 \mathbf{e}_{p_1} + \alpha_2 \mathbf{e}_{p_2}$. Then $\mathrm{wt}(b) = d + 2$. Define the code
\[
D = \langle a_1, \ldots, a_{t-1}, b, e_1, \ldots, e_{k-t} \rangle.
\]

Every codeword $v \in D$ can be written as $v = \sum_{i=1}^{t-1} \lambda_i a_i + \mu b + \sum_{j=1}^{k-t} \nu_j e_j$, with a corresponding codeword $w = \sum_{i=1}^{t-1} \lambda_i a_i + \mu a_t + \sum_{j=1}^{k-t} \nu_j e_j \in C$. Since $b$ and $a_t$ differ only at positions $p_1$ and $p_2$, we have $v_{p_\ell} = w_{p_\ell} + \mu \alpha_\ell$ for $\ell = 1, 2$, and $v_i = w_i$ for all $i \notin \{p_1, p_2\}$.

\textbf{Claim 1:} $\dim(D) = k$. If $b$ were linearly dependent on $\{a_1, \ldots, a_{t-1}, e_1, \ldots, e_{k-t}\}$, then $b \in C$, and $b - a_t = \alpha_1 \mathbf{e}_{p_1} + \alpha_2 \mathbf{e}_{p_2} \in C$ would be a codeword of weight~$2 < d$ (since $d \geq 3$), a contradiction.

\textbf{Claim 2:} $d_{\min}(D) = d$. Consider a nonzero codeword $v \in D$ with corresponding $w \in C$.
\begin{itemize}
\item \emph{Case A} ($\mu = 0$): $v = w \in C$, so $\mathrm{wt}(v) \geq d$.
\item \emph{Case B} ($\mu \neq 0$, all $\lambda_i = 0$, all $\nu_j = 0$): $v = \mu b$, so $\mathrm{wt}(v) = d + 2$.
\item \emph{Case C} ($\mu \neq 0$, at least one $\lambda_i$ or $\nu_j$ nonzero): The codeword $w \in C$ is nonzero and is not a scalar multiple of any single~$a_i$. By condition~(i), $\mathrm{wt}(w) \neq d$. Since $A_{d+1}(C) = A_{d+2}(C) = A_{d+3}(C) = 0$, we have $\mathrm{wt}(w) \geq d + 4$. As $v$ and $w$ differ in at most two coordinates, $\mathrm{wt}(v) \geq \mathrm{wt}(w) - 2 \geq d + 2$.
\end{itemize}
In all cases $\mathrm{wt}(v) \geq d$, and the scalar multiples of $a_1$ still achieve weight~$d$, so $d_{\min}(D) = d$.

\textbf{Claim 3:} $D$ has exactly $t - 1$ independent weight-$d$ codewords. From Cases~B and~C, $\mathrm{wt}(v) \geq d + 2$ whenever $\mu \neq 0$. Hence every weight-$d$ codeword of $D$ has $\mu = 0$ and is therefore a scalar multiple of some $a_i$ with $i \leq t - 1$.

\textbf{Claim 4:} $A_{d+1}(D) = 0$. We show that no codeword of $D$ has weight $d + 1$.
\begin{itemize}
\item \emph{Case A} ($\mu = 0$): $v = w \in C$ and $\mathrm{wt}(w) \neq d + 1$ since $A_{d+1}(C) = 0$.
\item \emph{Case B} ($\mu \neq 0$, all others zero): $\mathrm{wt}(v) = d + 2 \neq d + 1$.
\item \emph{Case C} ($\mu \neq 0$, at least one other nonzero): $\mathrm{wt}(w) \geq d + 4$ and $\mathrm{wt}(v) \geq d + 2$, so $\mathrm{wt}(v) \neq d + 1$.
\end{itemize}
Since $A_{d+1}(D) = 0$, the codewords of weight at most $d + 1$ in $D$ are exactly the codewords of weight $d$, namely the scalar multiples of $a_1, \ldots, a_{t-1}$. Therefore $\dim \langle S_{d+1}(D) \rangle = \dim \langle S_d(D) \rangle = t - 1$.
\end{proof}

\subsection{Application to Strict Function-Correcting Codes}
\label{sec:reverse-fcc}

We now interpret Theorems~\ref{thm:reverse1} and~\ref{thm:reverse2} in the context of strict FCCs, using the Cayley graph characterization from Section~\ref{sec:cayley}.

Let $C$ be an $[n, k, d]_q$ code satisfying the conditions of Theorem~\ref{thm:reverse1}, and let $D$ be the code produced by the one-position insertion. Since $\dim \langle S_d(D) \rangle = t - 1$, Corollary~\ref{cor:linear-components} implies that $G_d(D)$ has $q^{k-t+1}$ connected components, each of size $q^{t-1}$. By Corollary~\ref{cor:linear-existence}, the code $D$ can serve as a strict $(f : d, d+1)$-FCC for any function $f$ with $|\mathrm{Im}(f)| \leq q^{k-t+1}$ and each preimage size a multiple of $q^{t-1}$.

Similarly, let $D$ be the code produced by the two-position insertion of Theorem~\ref{thm:reverse2}. Since $\dim \langle S_{d+1}(D) \rangle = t - 1$, the graph $G_{d+1}(D)$ has $q^{k-t+1}$ connected components, each of size $q^{t-1}$. The code $D$ can therefore serve as a strict $(f : d, d+2)$-FCC, providing a larger gap between the data and function protection distances.

In the case $k = t$, i.e., when $C$ is generated by its minimum-weight codewords, the code $C$ itself cannot serve as a strict FCC for any nontrivial function (as noted in Section~\ref{sec:cayley}). However, the code $D$ produced by the construction has $\dim \langle S_d(D) \rangle = t - 1 = k - 1$, and can serve as a strict FCC with $|\mathrm{Im}(f)| \leq q$. The two theorems are summarized in Table~\ref{tab:reverse}.

\begin{table}[t]
\centering
\renewcommand{\arraystretch}{1.3}
\caption{Summary of the converse construction and its FCC interpretation.}
\label{tab:reverse}
\begin{tabular}{c|c}
\hline
  Code $C$ & Code $D$ \\
\hline
\multicolumn{2}{c}{\emph{Theorem~\ref{thm:reverse1}: One-position insertion}} \\
\hline
  $\dim \langle S_d \rangle = \dim \langle S_{d+1} \rangle = t$ & $\dim \langle S_d \rangle = t - 1$ \\[2mm]
 $(f : d,\, d+2)$-FCC & $(f : d,\, d+1)$-FCC \\
$|\mathrm{Im}(f)| \leq q^{k-t}$ & $|\mathrm{Im}(f)| \leq q^{k-t+1}$ \\[2mm]
\hline
\multicolumn{2}{c}{\emph{Theorem~\ref{thm:reverse2}: Two-position insertion}} \\
\hline
  $\dim \langle S_d \rangle = \cdots = \dim \langle S_{d+3} \rangle = t$ & $\dim \langle S_{d+1} \rangle = t - 1$ \\[2mm]
$(f : d,\, d+4)$-FCC & $(f : d,\, d+2)$-FCC \\
$|\mathrm{Im}(f)| \leq q^{k-t}$ & $|\mathrm{Im}(f)| \leq q^{k-t+1}$ \\
\hline
\end{tabular}
\end{table}

The table illustrates a trade-off: the code $D$ supports more function values than $C$ (by a factor of~$q$), but with a smaller gap between $d_d$ and $d_f$. In particular, when $k = t$, the code $C$ admits no strict FCC at all (since $q^{k-t} = 1$), while the code $D$ admits a strict FCC with up to $q$ function values.

\section{Chain Codes}
\label{sec:chain}

We now introduce a family of codes that satisfy the conditions of Theorems~\ref{thm:reverse1} and~\ref{thm:reverse2}, and therefore serve as input to the constructions developed in the previous section.

A \emph{chain code} is a linear code that admits a generator matrix with a specific overlap structure: each row has weight~$d$, and the supports of consecutive rows overlap in exactly~$s$ positions, forming a chain of interlocking blocks. We consider an open and a closed variant, denoted $C_{\mathrm{o}}(k,d,s)_q$ and $C_{\mathrm{c}}(k,d,s)_q$ respectively, and defined in the following subsections.

\subsection{Open Chain Code}
\label{sec:chain-open}

\begin{definition}[Open Chain Code]
\label{def:chain-open}
Let $q$ be a prime power, $k \geq 2$, $d \geq 2$, and $0 \leq s < d$. The open chain code $C_{\mathrm{o}}(k, d, s)_q$ is the $[n, k, d]_q$ code with $n = kd - (k-1)s$, generated by $a_0, a_1, \ldots, a_{k-1} \in \mathbb{F}_q^n$ defined as
\[
(a_i)_j =
\begin{cases}
1 & \text{if } i(d - s) \leq j < i(d - s) + d, \\
0 & \text{otherwise},
\end{cases}
\qquad j \in \{0,1,\ldots, n-1\}.
\]
Each $a_i$ has weight $d$ and occupies positions $[i(d-s),\, i(d-s) + d)$. Consecutive rows $a_i$ and $a_{i+1}$ share exactly $s$ positions, while non-adjacent rows $a_i$ and $a_j$ with $|i - j| \geq 2$ have disjoint supports when $s < d/2$.
\end{definition}

\begin{example}
\label{ex:chain-open}
The open chain code $C_{\mathrm{o}}(3, 6, 1)_3$ has $n = 16$ and basis vectors
\begin{align*}
a_0 &= (1\,1\,1\,1\,1\,1\,0\,0\,0\,0\,0\,0\,0\,0\,0\,0), \\
a_1 &= (0\,0\,0\,0\,0\,1\,1\,1\,1\,1\,1\,0\,0\,0\,0\,0), \\
a_2 &= (0\,0\,0\,0\,0\,0\,0\,0\,0\,0\,1\,1\,1\,1\,1\,1).
\end{align*}
Here $a_0$ and $a_1$ overlap in position $\{5\}$, and $a_1$ and $a_2$ overlap in position $\{10\}$.
\end{example}

\subsection{Closed Chain Code}
\label{sec:chain-closed}

\begin{definition}[Closed Chain Code]
\label{def:chain-closed}
Let $q$ be a prime power, $k \geq 2$, $d \geq 2$, and $0 \leq s < d$. The closed chain code $C_{\mathrm{c}}(k, d, s)_q$ is the $[n, k, d]_q$ code with $n = k(d - s)$, generated by $a_0, a_1, \ldots, a_{k-1} \in \mathbb{F}_q^n$ defined as
\[
(a_i)_j =
\begin{cases}
1 & \text{if } j \bmod n \in \{i(d-s),\, i(d-s)+1,\, \ldots,\, i(d-s)+d-1\}, \\
0 & \text{otherwise},
\end{cases}
\qquad j \in \{0,1,\ldots, n-1\}.
\]
where the positions are taken modulo $n$. Each consecutive pair $(a_i, a_{i+1})$, including $(a_{k-1}, a_0)$, overlaps in exactly $s$ positions.
\end{definition}

The closed chain code is shorter than the open one by $s$ positions: $n_{\mathrm{c}} = k(d - s) = (kd - (k-1)s) - s = n_{\mathrm{o}} - s$. The additional overlap between $a_{k-1}$ and $a_0$ accounts for this saving.

\begin{example}
\label{ex:chain-closed}
The closed chain code $C_{\mathrm{c}}(3, 6, 1)_3$ has $n = 15$ and basis vectors
\begin{align*}
a_0 &= (1\,1\,1\,1\,1\,1\,0\,0\,0\,0\,0\,0\,0\,0\,0), \\
a_1 &= (0\,0\,0\,0\,0\,1\,1\,1\,1\,1\,1\,0\,0\,0\,0), \\
a_2 &= (1\,0\,0\,0\,0\,0\,0\,0\,0\,0\,1\,1\,1\,1\,1).
\end{align*}
Here $a_2$ occupies positions $\{10, 11, 12, 13, 14, 0\}$ (wrapping around), overlapping with $a_1$ in $\{10\}$ and with $a_0$ in $\{0\}$.
\end{example}

\subsection{Properties of Chain Codes}
\label{sec:chain-properties}

\begin{proposition}
\label{prop:chain1}
Let $C$ be either $C_{\mathrm{o}}(k, d, s)_q$ or $C_{\mathrm{c}}(k, d, s)_q$. Then
\begin{enumerate}
\item[\textup{(a)}] $C$ is an $[n, k, d]_q$ code with $k = t$, i.e., $C$ is generated by its minimum-weight codewords,
\item[\textup{(b)}] $A_d(C) = k(q - 1)$,
\item[\textup{(c)}] $A_{d+1}(C) = 0$,
\end{enumerate}
under the following conditions on the overlap~$s$:
\begin{itemize}
\item For the open chain code: $s \leq \lfloor (d-2)/2 \rfloor$.
\item For the closed chain code: $s \leq \min\!\left( \ \left\lfloor \dfrac{d-2}{2} \right \rfloor,\, \left\lfloor \dfrac{(k-1)d - 2}{2k} \right\rfloor \ \right)$.
\end{itemize}
\end{proposition}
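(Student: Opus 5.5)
We will compute the weight of an arbitrary nonzero codeword $c=\sum_{i\in I}\lambda_i a_i$, with $I$ the set of indices with $\lambda_i\neq 0$. Every claim follows from a single lower bound: whenever $|I|\ge 2$, $\mathrm{wt}(c)\ge d+2$. This gives (b) and (c) at once, since weight-$d$ codewords then have $|I|=1$, and the $k(q-1)$ multiples $\lambda a_i$ are distinct and have weight exactly $d$. It also gives (a): the rows are linearly independent because each $a_i$ has a "private" position outside the supports of all other rows. In the open case this needs $d-2s>0$, which holds since $s<d/2$; in the closed case it additionally needs $k\ge 2$. Hence $\dim C=k$, $d_{\min}=d$, and the $k$ independent minimum-weight codewords $a_0,\dots,a_{k-1}$ generate $C$, so $t=k$.

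For the key bound, partition the coordinates into blocks. Each $a_i$ has a private segment of length $d-2s$, or $d-s$ for the two end rows of the open chain. Each consecutive pair $(a_i,a_{i+1})$ has an overlap segment of length $s$. Non-adjacent rows are disjoint because $s<d/2$. On a private segment of $a_i$, $c$ equals $\lambda_i\mathbf{1}$. On an overlap segment, $c$ equals $(\lambda_i+\lambda_{i+1})\mathbf{1}$, which may vanish. So $c$ has full weight on every private segment of $i\in I$, weight $s$ on overlap segments with exactly one endpoint in $I$, and weight $0$ or $s$ on overlaps with both endpoints in $I$.

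\textbf{Open case.} Decompose $I$ into maximal runs of consecutive indices. A run $i,i+1,\dots,j$ contributes at least the sum of its private parts plus the two boundary overlaps (when they exist). In the worst case all internal overlaps cancel, which gives $(j-i+1)d-2(j-i)s$ for a run with $j>i$. This is because each member contributes $d$ minus the internal overlaps it loses. With $m=j-i+1\ge 2$, this equals $md-2(m-1)s\ge 2d-2s\ge d+2$ by $s\le (d-2)/2$. If $I$ consists of several runs, the total is at least $d$ per run, and two runs already give $\ge 2d\ge d+2$ since $d\ge 2$. Boundary overlaps at the ends of the chain only reduce what is subtracted, never increase it.

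\textbf{Closed case.} The same run analysis applies, except that $I$ may be the whole cycle $\{0,\dots,k-1\}$, which has no boundary. If all $k$ overlaps cancel, then $\mathrm{wt}(c)=k(d-2s)$. Cancellation of every overlap forces $\lambda_{i+1}=-\lambda_i$ around the cycle. This is consistent whenever $q$ is even or $k$ is even, so the case cannot be excluded in general. Requiring $k(d-2s)\ge d+2$, i.e. $s\le \frac{(k-1)d-2}{2k}$, is exactly the second term of the hypothesis. Proper runs in the cycle are handled as in the open case, since every proper run has boundary overlaps on both sides.

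The main obstacle is bookkeeping the overlap segments correctly. A run of length $m$ loses at most $2(m-1)s$, and we must check that the run analysis never double-counts positions when runs are separated by a single gap index. Segments are disjoint by the hypothesis $s<d/2$, so summing contributions over segments is legitimate.
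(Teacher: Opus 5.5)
Your proof is correct and follows the paper's argument. Like the paper, you split the nonzero coefficients into maximal runs of consecutive indices, assume worst-case cancellation on internal overlaps to get $md-2(m-1)s\ge 2(d-s)\ge d+2$ for a run of length $m\ge 2$, and treat the full cycle of the closed code separately via $k(d-2s)\ge d+2$. Deriving (a)--(c) all from the single bound $\mathrm{wt}(c)\ge d+2$ for $|I|\ge 2$ is slightly cleaner than the paper. The paper's part (a) asserts a per-block bound $2(d-s)$, which fails for the full cyclic block (e.g.\ $k=2$, $s\ge 1$, where the minimum is $2(d-2s)$), although its conclusion is rescued by the bound it uses in (c).
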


\begin{proof}
In both the open and closed case, each basis vector $a_i$ has weight $d$. Adjacent pairs $(a_i, a_{i+1})$ (indices modulo $k$ in the closed case) overlap in exactly $s$ positions. Non-adjacent pairs $a_i, a_j$ with $\min(|i-j|, k - |i-j|) \geq 2$ (in the closed case) or $|i-j| \geq 2$ (in the open case) have disjoint supports, since the gap between consecutive support blocks is $d - 2s > 0$ (using $s < d/2$).

\textbf{Weight of nontrivial combinations.} Let $v = \sum_{i=0}^{k-1} c_i a_i$ with at least two nonzero coefficients. We partition the set of indices with $c_i \neq 0$ into maximal blocks of consecutive indices (cyclically in the closed case). Consider one such block $B = \{i, i+1, \ldots, i+\ell\}$ of length $|B| = \ell + 1$, and let $v_B = \sum_{j \in B} c_j a_j$.

If $|B|=\ell+1 \leq k$, cancellation can occur only at the $\ell$ overlap regions within the block, each of size $s$. The two endpoints of the block each contribute at least $d - s$ exclusive positions, and each interior vector contributes at least $d - 2s$ exclusive positions. In the worst case, all overlap positions cancel, giving
\[
\mathrm{wt}(v_B) \geq 2(d - s) + (\ell - 1)(d - 2s).
\]

For $v$ with $r \geq 2$ nonzero coefficients spread across $m \geq 1$ blocks $B_1, \ldots, B_m$, since non-adjacent vectors have disjoint supports, different blocks contribute to disjoint positions, and $\mathrm{wt}(v) = \sum_{j=1}^m \mathrm{wt}(v_{B_j})$.

\textbf{(a):} Every block with $|B_j| \geq 2$ contributes $\mathrm{wt}(v_{B_j}) \geq 2(d - s) > d$. If every block is a singleton, then $m = r \geq 2$ and $\mathrm{wt}(v) \geq 2d > d$. In all cases $\mathrm{wt}(v) > d$, so the minimum-weight codewords are exactly the scalar multiples of individual $a_i$, and $d_{\min}(C) = d$.

\textbf{(b):} Since the only weight-$d$ codewords are the scalar multiples $\lambda a_i$ for $\lambda \in \mathbb{F}_q^*$ and $0 \leq i \leq k-1$, we have $A_d(C) = k(q-1)$.

\textbf{(c):} We show that every codeword with at least two nonzero coefficients has weight at least $d + 2$.

For the open chain code, the minimum weight of such a codeword is $2(d - s)$, achieved by an adjacent pair with full cancellation in the overlap. The condition $s \leq \lfloor (d-2)/2 \rfloor$ gives $2(d - s) \geq 2(d - (d-2)/2) = d + 2$, so $A_{d+1}(C) = 0$.

For the closed chain code, two types of blocks must be considered. A non-full block of length $\ell + 1$ with $2 \leq \ell + 1 < k$ has two endpoints and behaves as in the open case, contributing weight at least $2(d - s)$. The condition $s \leq \lfloor (d-2)/2 \rfloor$ ensures $2(d-s) \geq d + 2$. A full block of length $k$, having all non-zero coefficients, forms a single cyclic block with no endpoints. Each vector contributes at least $d - 2s$ non-overlap positions, giving weight at least $k(d - 2s)$. The condition $s \leq \lfloor ((k-1)d - 2)/(2k) \rfloor$ ensures $k(d - 2s) \geq d + 2$. Taking the minimum of the two overlap bounds guarantees $A_{d+1}(C) = 0$ in all cases.
\end{proof}

All the results above remain valid if the all-ones entries in each $a_i$ are replaced by arbitrary nonzero elements of $\mathbb{F}_q$, since the weight bounds depend only on the support structure. Exclusive positions always contribute nonzero entries regardless of the specific values, and the worst-case cancellation count at overlap positions is unchanged.

Under a tighter overlap condition, the same codes also satisfy the hypotheses of Theorem~\ref{thm:reverse2}.

\begin{proposition}
\label{prop:chain2}
Let $d \geq 4$, and let $C$ be either $C_{\mathrm{o}}(k, d, s)_q$ or $C_{\mathrm{c}}(k, d, s)_q$. Then
\begin{enumerate}
\item[\textup{(a)}] $C$ is an $[n, k, d]_q$ code with $k = t$,
\item[\textup{(b)}] $A_d(C) = k(q - 1)$,
\item[\textup{(c)}] $A_{d+1}(C) = A_{d+2}(C) = A_{d+3}(C) = 0$,
\end{enumerate}
 under the following conditions on the overlap~$s$:
\begin{itemize}
\item For the open chain code: $s \leq \lfloor (d-4)/2 \rfloor$.
\item For the closed chain code: $s \leq \min\left(\ \left\lfloor \dfrac{d-4}{2} \right\rfloor,\, \left\lfloor \dfrac{(k-1)d - 4}{2k} \right\rfloor\ \right)$.
\end{itemize}
\end{proposition}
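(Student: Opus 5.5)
The argument reuses the block decomposition from the proof of Proposition~\ref{prop:chain1} and only sharpens the final numerical inequalities. Write a codeword as $v=\sum_i c_i a_i$. Group the indices with $c_i\neq 0$ into maximal blocks of consecutive indices, taken cyclically in the closed case. Since $s\le\lfloor (d-4)/2\rfloor < d/2$, non-adjacent generators have disjoint supports, so $\mathrm{wt}(v)=\sum_j \mathrm{wt}(v_{B_j})$.

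For each block we use the bounds already derived. A non-full block $\{i,\dots,i+\ell\}$ has two endpoints, so
\[
\mathrm{wt}(v_B)\ge 2(d-s)+(\ell-1)(d-2s)\ge 2(d-s).
\]
This holds for all blocks in the open case and for blocks of length less than $k$ in the closed case. A full cyclic block of length $k$ in the closed case gives $\mathrm{wt}(v_B)\ge k(d-2s)$.

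Parts (a) and (b) follow exactly as before, because the new condition on $s$ is stronger than the one in Proposition~\ref{prop:chain1}. Any codeword with at least two nonzero coefficients has weight exceeding $d$. Hence the weight-$d$ codewords are exactly the $k(q-1)$ scalar multiples $\lambda a_i$, and $k=t$.

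For (c) we show that every codeword with at least two nonzero coefficients has weight at least $d+4$. We split into three cases.
\begin{itemize}
\item If all blocks are singletons, there are at least two disjoint supports, so the weight is at least $2d\ge d+4$, using $d\ge 4$.
\item If some block has length at least $2$ and is non-full, its weight is at least $2(d-s)$. From $s\le (d-4)/2$ we get $2(d-s)\ge 2d-(d-4)=d+4$.
\item In the closed case with a full block, $s\le ((k-1)d-4)/(2k)$ gives $2ks\le (k-1)d-4$. Therefore $k(d-2s)=kd-2ks\ge kd-(k-1)d+4=d+4$.
\end{itemize}
Weights of the remaining blocks only add. Together with (a), this rules out the weights $d+1$, $d+2$ and $d+3$, so $A_{d+1}=A_{d+2}=A_{d+3}=0$.

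The main point of care is the interior-vector bound $d-2s$ for exclusive positions in the full cyclic block. The two overlaps of an interior generator must be disjoint, which requires $2s\le d$; this is guaranteed here. We also need the cyclic structure to be consistent, namely $n=k(d-s)\ge d+s$, i.e.\ $k\ge2$ with non-adjacent supports disjoint. For $k=2$ the two generators are cyclically adjacent on both sides. Then the two overlap regions, of total size $2s$, are the only places where cancellation can occur, and the bound $2(d-2s)$ still applies. Once these support facts are confirmed, everything else is the arithmetic above.
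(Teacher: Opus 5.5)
Your proof is correct and is essentially the paper's argument: reuse the block decomposition from Proposition~\ref{prop:chain1} and replace the target weight $d+2$ by $d+4$, so that non-full blocks need $2(d-s)\ge d+4$ and the full cyclic block needs $k(d-2s)\ge d+4$. Your explicit handling of the all-singleton case (where $d\ge 4$ gives $2d\ge d+4$) and of the closed code with $k=2$ (two overlap regions of total size $2s$) fills in details the paper leaves implicit; the side remark equating $k(d-s)\ge d+s$ with $k\ge 2$ is inaccurate but plays no role in the argument.
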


\begin{proof}
Parts~(a) and~(b) follow exactly as in the proof of Proposition~\ref{prop:chain1}. For part~(c), the same block-weight analysis applies with the target weight $d + 2$ replaced by $d + 4$: requiring $2(d - s) \geq d + 4$ yields $s \leq \lfloor (d-4)/2 \rfloor$, and requiring $k(d - 2s) \geq d + 4$ yields $s \leq \lfloor ((k-1)d - 4)/(2k) \rfloor$.
\end{proof}

\subsection{Parameters and Examples}
\label{sec:chain-params}

The parameters of chain codes are summarized in Table~\ref{tab:chain-params}. 

\begin{table}[t]
\centering
\renewcommand{\arraystretch}{1.3}
\caption{Parameters of chain codes.}
\label{tab:chain-params}
\begin{tabular}{c|c|c}
\hline
 & Open & Closed \\
\hline
Length $n$ & $kd - (k-1)s$ & $k(d - s)$ \\
Dimension & $k$ & $k$ \\
Min.\ distance & $d$ & $d$ \\
Rate $R = k/n$ & $\dfrac{k}{kd - (k-1)s}$ & $\dfrac{1}{d - s}$ \\[2mm]
\hline
\end{tabular}
\end{table}

\begin{example}
\label{ex:chain-instances}
Table~\ref{tab:chain-examples} lists some specific chain code instances.

\begin{table}[t]
\centering
\renewcommand{\arraystretch}{1.15}
\caption{Examples of chain codes.}
\label{tab:chain-examples}
\begin{tabular}{cccc}
\hline
Code & $[n, k, d]_q$ & $A_d$ & $A_{d+1}$ \\
\hline
$C_{\mathrm{o}}(2, 6, 1)_5$ & $[11, 2, 6]_5$ & $8$ & $0$ \\
$C_{\mathrm{c}}(2, 6, 1)_5$ & $[10, 2, 6]_5$ & $8$ & $0$ \\
$C_{\mathrm{o}}(3, 6, 1)_3$ & $[16, 3, 6]_3$ & $6$ & $0$ \\
$C_{\mathrm{c}}(3, 6, 1)_3$ & $[15, 3, 6]_3$ & $6$ & $0$ \\
$C_{\mathrm{o}}(4, 5, 1)_7$ & $[17, 4, 5]_7$ & $24$ & $0$ \\
$C_{\mathrm{c}}(4, 5, 1)_7$ & $[16, 4, 5]_7$ & $24$ & $0$ \\
\hline
\end{tabular}
\end{table}
\end{example}

Since the closed chain code has a stricter overlap condition than the open one, it is natural to ask which variant achieves shorter length for given $k$ and $d$. Table~\ref{tab:chain-compare} compares the two at their respective maximum allowable overlaps for $q = 5$ and $d = 10$.

\begin{table}[t]
\centering
\renewcommand{\arraystretch}{1.15}
\caption{Length comparison of open and closed chain codes at maximum overlap ($q = 5$, $d = 10$).}
\label{tab:chain-compare}
\begin{tabular}{c|cc|cc|c}
\hline
$k$ & $s_{\mathrm{o}}$ & $n_{\mathrm{o}}$ & $s_{\mathrm{c}}$ & $n_{\mathrm{c}}$ & $n_{\mathrm{o}} - n_{\mathrm{c}}$ \\
\hline
$2$  & $4$ & $16$ & $2$ & $16$ & $0$ \\
$3$  & $4$ & $22$ & $3$ & $21$ & $1$ \\
$4$  & $4$ & $28$ & $3$ & $28$ & $0$ \\
$5$  & $4$ & $34$ & $3$ & $35$ & $-1$ \\
$10$ & $4$ & $64$ & $4$ & $60$ & $4$ \\
\hline
\end{tabular}
\end{table}

The comparison shows that neither variant is uniformly shorter: for $k = 3$ the closed code saves one position, for $k = 5$ the open code is shorter, and for $k = 10$ the closed code saves four positions. The relative advantage depends on the interplay between the stricter overlap bound for the closed code and the length saving from the wraparound structure.

By Propositions~\ref{prop:chain1} and~\ref{prop:chain2}, every chain code satisfying the respective overlap conditions can be used as input to Theorems~\ref{thm:reverse1} and~\ref{thm:reverse2}. The resulting code $D$ has the same parameters $[n, k, d]_q$ but with $k - 1$ independent minimum-weight codewords, and can serve as a strict FCC as described in Section~\ref{sec:reverse-fcc}.

\section{BCH Code Construction}
\label{sec:bch}

We now present an independent construction of strict function-correcting codes from narrow-sense BCH codes with designed distance three. By the Cayley graph characterization of Section~\ref{sec:cayley}, the connected components of the $\alpha$-distance graph of a linear code are cosets of the subcode generated by codewords of weight at most $\alpha$. For the BCH code $C_{1,2}$ with $\alpha = d_f - 1 = 3$, this subcode is $\langle S_3 \rangle$, the span of all weight-3 codewords. If $\langle S_3 \rangle$ is contained in a proper subcode $D$ of $C_{1,2}$, then the number of connected components is at least $|C_{1,2}|/|D|$, which equals $p^{\mathrm{codim}(D)}$. The smaller $D$ is, the more components there are, and hence the more function values can be supported in the FCC construction.

Mogilnykh and Solov'eva~\cite{MS2020} proved that for primes $p \geq 5$, the weight-3 codewords of $C_{1,2}$ are contained in the proper subcode $C_{1,2,\,p^2+1}$, obtained by adjoining a single additional cyclotomic coset to the defining set. The following theorem strengthens this by showing that the weight-3 codewords lie in a much smaller subcode, obtained by adjoining $(m-1)/2$ additional cyclotomic cosets.


\begin{theorem}
\label{thm:bch}
Let $p \geq 5$ be a prime and $m \geq 3$ an odd integer. Then the set of weight-$3$ codewords of $C_{1,2}$ is contained in the subcode
\[
D \;=\; C_{1,2,\; p+1,\; p^2+1,\; \ldots,\; p^{(m-1)/2}+1}.
\]
In particular, $\langle S_3 \rangle \subseteq D \subsetneq C_{1,2}$, where $\langle S_3 \rangle$ denotes the $\mathbb{F}_p$-span of all weight-$3$ codewords of $C_{1,2}$.
\end{theorem}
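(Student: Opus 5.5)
The plan is to work directly with the zeros of the cyclic code. Fix the primitive element $\alpha$ of $\mathbb{F}_{p^m}$ from Definition~\ref{def:bch}. A weight-$3$ codeword of $C_{1,2}$ has coefficients $a,b,c \in \mathbb{F}_p^*$ at three distinct positions $i,j,l$. Put $X=\alpha^i$, $Y=\alpha^j$, $Z=\alpha^l$; these are pairwise distinct nonzero elements of $\mathbb{F}_{p^m}$.

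\textbf{Step 1: reduce membership to two (resp.\ $(m-1)/2$) equations.} Since $a,b,c\in\mathbb{F}_p$, applying Frobenius shows that vanishing at $\alpha^{s}$ forces vanishing on the whole coset $\mathrm{Cl}(s)$. Hence membership in $C_{1,2}$ is equivalent to
\[
aX+bY+cZ=0, \qquad aX^2+bY^2+cZ^2=0 .
\]
Likewise, membership in $D$ additionally requires
\[
aX^{p^e+1}+bY^{p^e+1}+cZ^{p^e+1}=0 \quad \text{for } 1\le e\le (m-1)/2 .
\]

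\textbf{Step 2 (key step): the ratios $X/Y$ and $Z/Y$ lie in $\mathbb{F}_p$.}
\begin{itemize}
\item Eliminate $Z$. From the first equation $cZ=-(aX+bY)$, and from the second $cZ^2=-(aX^2+bY^2)$. Since $(cZ)^2=c\cdot cZ^2$, we get $(aX+bY)^2=-c(aX^2+bY^2)$.
\item Divide by $Y^2$ and set $u=X/Y$. This gives
\[
(a^2+ac)u^2+2ab\,u+(b^2+bc)=0,
\]
a polynomial equation with coefficients in $\mathbb{F}_p$.
\item The polynomial is not identically zero. If all coefficients vanished, then $2ab=0$ with $p$ odd, contradicting $a,b\neq 0$.
\item Therefore $u$ is a root of a nonzero polynomial of degree at most $2$ over $\mathbb{F}_p$, so $u\in\mathbb{F}_{p^2}$. (If $a=-c$ the equation is linear, and then $u\in\mathbb{F}_p$ directly.)
\item Since $m$ is odd, $\mathbb{F}_{p^2}\cap\mathbb{F}_{p^m}=\mathbb{F}_p$, so $u\in\mathbb{F}_p$.
\item By the symmetric argument, eliminating $X$ instead, $w=Z/Y\in\mathbb{F}_p$.
\end{itemize}
This is the step where the oddness of $m$ is used, and I expect it to be the main point needing care, namely checking that the degenerate cases do not escape.

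\textbf{Step 3: verify the extra equations.} Write $X=uY$ and $Z=wY$ with $u,w\in\mathbb{F}_p$. Then $u^{p^e}=u$ gives $X^{p^e+1}=u^2Y^{p^e+1}$, and similarly $Z^{p^e+1}=w^2Y^{p^e+1}$. Hence
\[
aX^{p^e+1}+bY^{p^e+1}+cZ^{p^e+1}=(au^2+b+cw^2)\,Y^{p^e+1}=Y^{p^e-1}\,(aX^2+bY^2+cZ^2)=0 .
\]
This holds for every $e$, and in particular for $1\le e\le (m-1)/2$. So every weight-$3$ codeword lies in $D$, and hence $\langle S_3\rangle\subseteq D$.

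\textbf{Step 4: properness of $D$.} It remains to show $D\subsetneq C_{1,2}$, i.e.\ that $\mathrm{Cl}(p^e+1)$ is not contained in $\mathrm{Cl}(1)\cup\mathrm{Cl}(2)$. Multiplying by $p$ cyclically rotates base-$p$ digit strings of length $m$. The string of $p^e+1$ has two digits equal to $1$ (using $1\le e<m$). Elements of $\mathrm{Cl}(1)$ have a single digit $1$, and elements of $\mathrm{Cl}(2)$ have a single digit $2$; the latter is a valid single digit because $p\ge5$. So $\mathrm{Cl}(p^e+1)$ is disjoint from both cosets, and the defining set of $D$ strictly contains $T_{1,2}$. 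Therefore $\dim D<\dim C_{1,2}$.
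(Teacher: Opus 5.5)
Your proof is correct and follows essentially the same route as the paper. You eliminate one unknown to obtain a nonzero quadratic over $\mathbb{F}_p$ satisfied by a ratio of the roots; dividing by $Y$ plays the role of the paper's normalization $c(x)=1+ax^i+bx^j$. You then use $\mathbb{F}_{p^2}\cap\mathbb{F}_{p^m}=\mathbb{F}_p$ for odd $m$, and finally use that Frobenius fixes these ratios to get vanishing at each $\alpha^{p^e+1}$. Your base-$p$ digit-rotation argument for $\mathrm{Cl}(p^e+1)\cap(\mathrm{Cl}(1)\cup\mathrm{Cl}(2))=\emptyset$ is a slightly cleaner version of the paper's divisibility argument. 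The paper's version overlooks that $1\in\mathrm{Cl}(1)$ is not divisible by $p$, though this slip does not affect its conclusion.
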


\begin{proof}
Let $c(x)$ be a weight-3 codeword of $C_{1,2}$. Without loss of generality, we may write
\[
c(x) = 1 + a\,x^i + b\,x^j, \qquad a, b \in \mathbb{F}_p^*, \quad 0 < i < j \leq p^m - 2.
\]
Let $\alpha$ be a primitive element of $\mathbb{F}_{p^m}$. Since $c(x) \in C_{1,2}$, the parity-check conditions give
\begin{align}
1 + a\,\alpha^i + b\,\alpha^j &= 0, \label{eq:pc1} \\
1 + a\,\alpha^{2i} + b\,\alpha^{2j} &= 0. \label{eq:pc2}
\end{align}
We show that $\alpha^{p^r+1}$ is a root of $c(x)$ for every $1 \leq r \leq m-1$, i.e.,
\begin{equation}
\label{eq:target}
1 + a\,\alpha^{(p^r+1)i} + b\,\alpha^{(p^r+1)j} = 0.
\end{equation}

From~\eqref{eq:pc1}, we get $\alpha^j = -(1 + a\,\alpha^i)/b$. Substituting into~\eqref{eq:pc2} and simplifying yields
\begin{equation}
\label{eq:quadratic}
a(a + b)\,\alpha^{2i} + 2a\,\alpha^i + (b + 1) = 0.
\end{equation}

If $a + b = 0$), then equation~\eqref{eq:quadratic} reduces to $2a\,\alpha^i + (b+1) = 0$, so $\alpha^i = -(b+1)/(2a) \in \mathbb{F}_p$.

If $a + b \neq 0$, then equation~\eqref{eq:quadratic} is a nondegenerate quadratic in $\alpha^i$ over $\mathbb{F}_p$, with discriminant $\Delta = 4a^2 - 4a(a+b)(b+1) \in \mathbb{F}_p$. Setting $D^2 = \Delta$, we have $(D^2)^{p-1} = 1$, so $(D^{p-1})^2 = 1$ and $D^{p-1} \in \{1, -1\}$. Then $D^{p^2 - 1} = (D^{p-1})^{p+1} = (\pm 1)^{p+1} = 1$ (since $p$ is odd), which gives $D \in \mathbb{F}_{p^2}$. The quadratic formula yields $\alpha^i = (-2a \pm D)\big/\big(2a(a+b)\big) \in \mathbb{F}_{p^2}$.

Since $m$ is odd, $\gcd(2, m) = 1$, and therefore $\mathbb{F}_{p^2} \cap \mathbb{F}_{p^m} = \mathbb{F}_p$. As $\alpha^i$ is an element of $\mathbb{F}_{p^m}$ lying in $\mathbb{F}_{p^2}$, we conclude $\alpha^i \in \mathbb{F}_p$.

In both cases $\alpha^i \in \mathbb{F}_p$, and the same argument gives $\alpha^j \in \mathbb{F}_p$.

Since $\alpha^i \in \mathbb{F}_p$, we have $(\alpha^i)^{p^r} = \alpha^i$ for every $r \geq 1$, and similarly for $\alpha^j$. Therefore, for any $1 \leq r \leq m - 1$,
\[
1 + a\,\alpha^{(p^r+1)i} + b\,\alpha^{(p^r+1)j}
= 1 + a\,(\alpha^i)^{p^r}\!\cdot\alpha^i + b\,(\alpha^j)^{p^r}\!\cdot\alpha^j
= 1 + a\,\alpha^{2i} + b\,\alpha^{2j} = 0,
\]
where the last equality follows from~\eqref{eq:pc2}. Hence $\alpha^{p^r+1}$ is a root of $c(x)$ for every $1 \leq r \leq m - 1$.

For $1 \leq r \leq m - 1$, the element $p^r + 1$ belongs to the cyclotomic coset $\mathrm{Cl}(p^r + 1)$. Since $(p^r + 1)\,p^{m-r} = p^m + p^{m-r} \equiv p^{m-r} + 1 \pmod{p^m - 1}$, we have $\mathrm{Cl}(p^r + 1) = \mathrm{Cl}(p^{m-r} + 1)$. Because $m$ is odd, the pairs $\{r,\, m - r\}$ for $r = 1, \ldots, m-1$ yield at most $(m-1)/2$ distinct cyclotomic cosets, and we may take $1 \leq r \leq (m-1)/2$ as representatives. Therefore
\[
c(x) \in C_{1,2,\; p+1,\; p^2+1,\; \ldots,\; p^{(m-1)/2}+1} = D. \qedhere
\]
\end{proof}

The containment $\langle S_3 \rangle \subseteq D$ is proper, since $p \geq 5$, the element $p + 1$ does not belong to $\mathrm{Cl}(1) = \{p^i : 0 \leq i \leq m-1\}$ or $\mathrm{Cl}(2) = \{2p^i : 0 \leq i \leq m-1\}$. Every element of $\mathrm{Cl}(1)$ is divisible by $p$, and every element of $\mathrm{Cl}(2)$ other than $2$ is divisible by $p$, but $p + 1$ is neither divisible by $p$ nor equal to $2$. Hence $D$ is a proper subcode of $C_{1,2}$.

When $m$ is prime, the dimension of $D$ can be determined exactly.

\begin{theorem}
\label{thm:bch-dim}
Let $p \geq 5$ be a prime and $m \geq 3$ a prime. Then
\[
\dim(D) \;=\; p^m - 1 - \frac{m(m+3)}{2}.
\]
\end{theorem}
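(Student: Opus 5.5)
Since $D$ is the cyclic code with defining set $T = \mathrm{Cl}(1)\cup\mathrm{Cl}(2)\cup\bigcup_{r=1}^{(m-1)/2}\mathrm{Cl}(p^r+1)$, I will use the same formula as in Definition~\ref{def:bch}, namely $\dim(D) = p^m-1-|T|$. It therefore suffices to show $|T| = m(m+3)/2 = 2m + m\cdot\frac{m-1}{2}$. This splits into two claims. First, each of the $(m+3)/2$ cosets $\mathrm{Cl}(1)$, $\mathrm{Cl}(2)$, $\mathrm{Cl}(p^r+1)$ for $1\le r\le (m-1)/2$ has size exactly $m$. Second, these cosets are pairwise distinct, and hence disjoint.

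\textbf{Sizes.} Every coset size $\ell_i$ divides $m$. Since $m$ is prime, each size is $1$ or $m$. A coset of size $1$ means $i(p-1)\equiv 0 \pmod{p^m-1}$, i.e.\ $i$ is a multiple of $(p^m-1)/(p-1) = 1+p+\cdots+p^{m-1}$. For $m\ge3$ this multiple is larger than $2$ and larger than $p^r+1$ with $r\le (m-1)/2 < m-1$, and all three representatives are positive. So no representative is a positive multiple of it, and every coset has size $m$.

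\textbf{Distinctness.} I compare the base-$p$ digit patterns of the residues modulo $p^m-1$. Multiplication by $p$ cyclically rotates the $m$ base-$p$ digits of a residue in $[1,p^m-2]$. $\mathrm{Cl}(1)$ consists of rotations of a single digit $1$. $\mathrm{Cl}(2)$ consists of rotations of a single digit $2$, which exists because $p\ge5$. $\mathrm{Cl}(p^r+1)$ consists of rotations of two digits equal to $1$ at cyclic distance $r$, i.e.\ at positions $0$ and $r$. These patterns are clearly distinct across the three types.

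For two cosets of the same type, take $1\le r<r'\le (m-1)/2$. The rotation class of the pattern with $1$'s at $\{0,r\}$ is determined by the unordered cyclic gap $\{r, m-r\}$. Since $r,r'\le (m-1)/2$, the values $\min(r,m-r)=r$ and $\min(r',m-r')=r'$ differ, so the cosets differ. This is the same pairing argument used in the proof of Theorem~\ref{thm:bch}, now run in the converse direction to get injectivity.

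The main care point is justifying that rotation of digits faithfully represents multiplication by $p$ modulo $p^m-1$. This holds because all representatives have digit strings that are neither all zero nor all $(p-1)$, so no carries or identifications $0\equiv p^m-1$ occur. Given this, $|T| = m(m+3)/2$ and the formula follows.

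Note that primality of $m$ is used only for the sizes. Oddness of $m$ was already needed for the definition of $D$, and it also guarantees $r\neq m-r$.
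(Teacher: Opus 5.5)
Your proof is correct and takes essentially the same route as the paper. Both reduce to showing $|T| = m(m+3)/2$, both get the coset sizes from $|\mathrm{Cl}(i)|\in\{1,m\}$ plus a magnitude argument ruling out size $1$, and both get distinctness from uniqueness of base-$p$ expansions: your digit-rotation picture is what underlies the paper's matching of exponents in $p^{r+t}+p^t\equiv p^s+1 \pmod{p^m-1}$, which forces $r+s\equiv 0 \pmod m$. The minor differences are that you separate $\mathrm{Cl}(p^r+1)$ from $\mathrm{Cl}(1),\mathrm{Cl}(2)$ by digit multisets rather than divisibility by $p$, and that you also check directly that $\mathrm{Cl}(1)$ and $\mathrm{Cl}(2)$ have size $m$ and are distinct, whereas the paper takes this from $\dim(C_{1,2}) = p^m-1-2m$.
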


\begin{proof}
Since $\dim(C_{1,2}) = p^m - 1 - 2m$, it suffices to show that the $(m-1)/2$ additional cyclotomic cosets $\mathrm{Cl}(p^r + 1)$ for $1 \leq r \leq (m-1)/2$ each have size $m$ and are pairwise distinct and disjoint from $\mathrm{Cl}(1)$ and $\mathrm{Cl}(2)$. We verify three claims.

\textit{Claim~1:} $|\mathrm{Cl}(p^r + 1)| = m$ for all $1 \leq r \leq (m-1)/2$.

Since $|\mathrm{Cl}(i)|$ divides $m$ and $m$ is prime, $|\mathrm{Cl}(i)| \in \{1, m\}$. If $|\mathrm{Cl}(p^r + 1)| = 1$, then $p(p^r + 1) \equiv p^r + 1 \pmod{p^m - 1}$, giving $(p - 1)(p^r + 1) \equiv 0 \pmod{p^m - 1}$. Since $r \leq (m-1)/2$ and $m \geq 3$, we have $(p-1)(p^r + 1) \leq (p-1)(p^{(m-1)/2} + 1) < p^m - 1$, a contradiction. Hence $|\mathrm{Cl}(p^r + 1)| = m$.

\textit{Claim~2:} $\mathrm{Cl}(p^r + 1) \neq \mathrm{Cl}(p^s + 1)$ for all $1 \leq r < s \leq (m-1)/2$.

Suppose $\mathrm{Cl}(p^r + 1) = \mathrm{Cl}(p^s + 1)$. Then $p^t(p^r + 1) \equiv p^s + 1 \pmod{p^m - 1}$ for some $t \geq 1$, giving $p^{r+t} + p^t \equiv p^s + 1 \pmod{p^m - 1}$. Since $t \neq 0$, this forces $r + t \equiv 0 \pmod{m}$ and $t \equiv s \pmod{m}$, hence $r + s \equiv 0 \pmod{m}$. But $2 \leq r + s \leq m - 1$, so this is impossible.

\textit{Claim~3:} $\mathrm{Cl}(p^r + 1) \neq \mathrm{Cl}(1)$ and $\mathrm{Cl}(p^r + 1) \neq \mathrm{Cl}(2)$ for all $1 \leq r \leq (m-1)/2$.

Every element of $\mathrm{Cl}(1)$ is divisible by $p$, and every element of $\mathrm{Cl}(2)$ except $2$ is divisible by $p$. Since $p^r + 1$ is not divisible by $p$ and is not equal to $2$ (as $p \geq 5$ and $r \geq 1$), the element $p^r + 1$ does not belong to $\mathrm{Cl}(1)$ or $\mathrm{Cl}(2)$.

The three claims together give
\[
\dim(D) = p^m - 1 - \Big(\underbrace{2m}_{\mathrm{Cl}(1),\,\mathrm{Cl}(2)} + \underbrace{\frac{m-1}{2} \cdot m}_{\mathrm{Cl}(p^r+1), r \in [\frac{m-1}{2}]}\Big) = p^m - 1 - \frac{m(m + 3)}{2}. 
\]
\end{proof}

\begin{corollary}
\label{cor:bch-dim}
Under the hypotheses of Theorem~\textup{\ref{thm:bch}}, $\dim\langle S_3 \rangle \leq p^m - 1 - m(m+3)/2$. When $m$ is prime, this bound equals $\dim(D)$.
\end{corollary}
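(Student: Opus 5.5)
The corollary has two parts. The first is the inequality $\dim\langle S_3\rangle \le p^m-1-m(m+3)/2$. The second is the claim that, when $m$ is prime, the bound is exactly $\dim(D)$. Both are obtained by combining the containment of Theorem~\ref{thm:bch} with the dimension count of Theorem~\ref{thm:bch-dim}.

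First, by Theorem~\ref{thm:bch}, every weight-$3$ codeword of $C_{1,2}$ lies in $D = C_{1,2,\,p+1,\,\ldots,\,p^{(m-1)/2}+1}$. Since $D$ is a cyclic code, it is in particular an $\mathbb{F}_p$-linear subspace. Hence the $\mathbb{F}_p$-span satisfies $\langle S_3\rangle \subseteq D$, and therefore $\dim\langle S_3\rangle \le \dim(D)$.

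Next we bound $\dim(D)$ for general odd $m$. The defining set of $D$ is
\[
T = \mathrm{Cl}(1)\cup\mathrm{Cl}(2)\cup\bigcup_{r=1}^{(m-1)/2}\mathrm{Cl}(p^r+1),
\]
and $\dim(D) = p^m-1-|T|$. It therefore suffices to show $|T| \ge m(m+3)/2$, which can fail only if some cosets are smaller than $m$ or coincide. When $m$ is prime, Theorem~\ref{thm:bch-dim} gives $|T| = m(m+3)/2$ exactly, and the inequality becomes the stated equality. This settles the second sentence of the corollary.

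For composite odd $m$ there are two possible issues. The first is coset sizes. The argument of Claim~1 in the proof of Theorem~\ref{thm:bch-dim} extends: if $|\mathrm{Cl}(p^r+1)| = \ell < m$, then $\ell \mid m$ and $p^\ell(p^r+1) \equiv p^r+1 \pmod{p^m-1}$. I expect this to be ruled out by comparing base-$p$ digit patterns. The integer $p^r+1$ with $r \le (m-1)/2$ has exactly two nonzero digits, equal to $1$, in positions $0$ and $r$. Multiplying by $p^\ell$ rotates these digit positions cyclically modulo $m$, so invariance under the rotation would force $\{0,r\}$ to be invariant under a shift by $\ell$. That happens only if $\ell \equiv 0$ or $2r \equiv 0 \pmod m$, and the latter is impossible since $m$ is odd and $0<r<m$. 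The same digit argument gives distinctness: $\{0,r\}$ and $\{0,s\}$ are rotations of each other only if $r+s \equiv 0$ or $r=s$. It also gives disjointness from $\mathrm{Cl}(1)$ and $\mathrm{Cl}(2)$, whose elements have a single nonzero digit.

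This digit-rotation step for composite $m$ is the main obstacle, since reduction modulo $p^m-1$ must be checked to preserve the digit representation. It does so because all the integers involved are less than $p^m-1$ and their digits lie in $\{0,1,2\}$, with $p \ge 5$, so no carries occur. If that step were unavailable, one would retain only the weaker bound $\dim\langle S_3\rangle \le \dim(D)$ for composite $m$, and the inequality as stated would follow as written in the prime case.
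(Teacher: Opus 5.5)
Your proof is correct. For prime $m$ it is exactly the derivation the paper intends: the corollary is stated without proof, as the combination of $\langle S_3\rangle\subseteq D$ (Theorem~\ref{thm:bch}) with the count $|T|=m(m+3)/2$ (Theorem~\ref{thm:bch-dim}).

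You differ from the paper in the composite odd case. The corollary's statement covers it, but the paper's results do not. Claim~1 of Theorem~\ref{thm:bch-dim} uses primality to force orbit sizes into $\{1,m\}$. Without knowing that the $(m-1)/2$ extra cosets have full size and are distinct, one only has $|T|\le m(m+3)/2$, i.e.\ $\dim(D)\ge p^m-1-m(m+3)/2$, which points the wrong way.

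Your digit-rotation argument closes this gap. Multiplication by $p$ modulo $p^m-1$ cyclically shifts the $m$ base-$p$ digits, and $p^r+1$ has its nonzero digits at $\{0,r\}\subseteq\mathbb{Z}_m$. A nonzero translate of $\{0,r\}$ equals $\{0,r\}$ only if $2r\equiv 0\pmod m$. A translate of $\{0,r\}$ equals $\{0,s\}$ with $r\neq s$ only if $r+s\equiv 0\pmod m$. Both are impossible, since $m$ is odd and $3\le r+s\le m-2$.

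This replaces the size estimate in the paper's Claim~1 and subsumes its Claim~2, which is already a digit comparison. It shows $\dim(D)=p^m-1-m(m+3)/2$ for every odd $m\ge 3$. Consequently Theorem~\ref{thm:bch-dim} holds without the primality hypothesis, and the qualifier ``when $m$ is prime'' in the corollary can be dropped. It also justifies, for composite odd $m$, the codimension $m(m-1)/2$ computed after the corollary and the parameters in Section~\ref{sec:bch-fcc}, both of which the paper states for general odd $m$.

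Two small points. First, the exactness of the digit shift has nothing to do with carries or digit sizes; it holds for every residue in $[0,p^m-2]$. What $p\ge 3$ buys is only that $2$ is a single digit, so elements of $\mathrm{Cl}(2)$ have exactly one nonzero digit. Second, your closing hedge is unnecessary, since the step does go through.
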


The codimension of $D$ in $C_{1,2}$ is
\[
\dim(C_{1,2}) - \dim(D) = (p^m - 1 - 2m) - \Big(p^m - 1 - \frac{m(m+3)}{2}\Big) = \frac{m(m-1)}{2},
\]
so $C_{1,2}$ partitions into $p^{m(m-1)/2}$ cosets of $D$, each of size $|D| = p^{\dim(D)}$.

\subsection{Application to FCC Construction}
\label{sec:bch-fcc}

Since $\langle S_3 \rangle \subseteq D \subsetneq C_{1,2}$, the coset partition of $\langle S_3 \rangle$ in $C_{1,2}$ is a refinement of the coset partition of $D$ in $C_{1,2}$. In particular, the connected components of $G_3(C_{1,2})$ are unions of cosets of $D$. This makes $C_{1,2}$ suitable for use as a strict $(f : 3, 4)$-FCC. The data-correcting distance is $d_d = d_{\min}(C_{1,2}) = 3$, and the function-correcting distance is $d_f = 4$, since weight-4 codewords exist in $C_{1,2} \setminus D$.

More precisely, for any function $f \colon \mathbb{F}_p^k \to \mathrm{Im}(f)$ with $k = \dim(C_{1,2}) = p^m - 1 - 2m$ satisfying
\begin{enumerate}
\item[\textup{(i)}] $|\mathrm{Im}(f)| \leq p^{m(m-1)/2}$, and
\item[\textup{(ii)}] each $|f^{-1}(a)|$ is a multiple of $|D| = p^{\,p^m - 1 - m(m+3)/2}$,
\end{enumerate}
the code $C_{1,2}$ serves as a strict $(f : 3, 4)$-FCC.

\begin{example}
\label{ex:bch}
Let $p = 5$ and $m = 3$, so that $n = 124$ and $\mathbb{F}_{125} = \mathbb{F}_5[x]/(x^3 + x^2 + 1)$ with primitive element $\alpha$. The relevant cyclotomic cosets modulo $124$ are
\[
\mathrm{Cl}(1) = \{1, 5, 25\}, \qquad \mathrm{Cl}(2) = \{2, 10, 50\}, \qquad \mathrm{Cl}(6) = \{6, 30, 26\}.
\]
Here $p + 1 = 6$, and $(m-1)/2 = 1$, so the only additional coset is $\mathrm{Cl}(6)$. The two codes are
\begin{align*}
C_{1,2} &: \text{defining set } \mathrm{Cl}(1) \cup \mathrm{Cl}(2), \quad [124, 118, 3]_5, \\
D = C_{1,2,6} &: \text{defining set } \mathrm{Cl}(1) \cup \mathrm{Cl}(2) \cup \mathrm{Cl}(6), \quad [124, 115, 3]_5.
\end{align*}
The codimension is $118 - 115 = 3 = m(m-1)/2$, giving $5^3 = 125$ cosets of $D$ in $C_{1,2}$, each of size $5^{115}$.

Let $A$ be any $118 \times 3$ matrix of rank $3$ over $\mathbb{F}_5$, and define $f \colon \mathbb{F}_5^{118} \to \mathbb{F}_5^3$ by $f(u) = uA$. Then $|\mathrm{Im}(f)| = 125$ and $|f^{-1}(a)| = 5^{115}$ for every $a \in \mathrm{Im}(f)$. By Theorem~\ref{thm:bch}, the code $C_{1,2}$ is a strict $(f : 3, 4)$-FCC for this $f$, with parameters as follows:
\[
\renewcommand{\arraystretch}{1.15}
\begin{array}{c|c}
\hline
\text{Dimension } k & 118 \\
\text{Length } n & 124 \\
\text{Redundancy } r & 6 \\
\text{Function } f & \mathbb{F}_5^{118} \to \mathbb{F}_5^3 \\
|\mathrm{Im}(f)| & 125 \\
|f^{-1}(a)| & 5^{115} \\
(d_d,\, d_f) & (3, 4) \\
\hline
\end{array}
\]
\end{example}

\section{Conclusion}
\label{sec:conclusion}

We studied the existence and construction of strict function-correcting codes with data protection for linear codes. Using the Cayley graph structure of the $\alpha$-distance graph, we showed that the existence problem reduces to finding codes whose minimum-weight codewords generate a proper subcode. We developed a converse to Simonis's theorem that transforms a code generated by its minimum-weight codewords into one with the same parameters but fewer independent minimum-weight codewords, and introduced chain codes as an infinite family satisfying the required conditions. Independently, we proved that for primes $p \geq 5$ and odd $m \geq 3$, the weight-3 codewords of the narrow-sense BCH code $C_{1,2}$ are contained in a proper subcode of codimension $m(m-1)/2$, resulting in  explicit strict $(f:3,4)$-FCC constructions.


 

\end{document}